\newcommand{\be}{\begin{eqnarray}}
\newcommand{\ee}{\end{eqnarray}}
\newcommand{\beeq}{\begin{equation}}
\newcommand{\eeeq}{\end{equation}}
\newcommand{\beqs}{\begin{eqnarray*}}
\newcommand{\eeqs}{\end{eqnarray*}}
\newtheorem{thm}{Theorem}
\newtheorem{corol}[thm]{Corollary}
\newtheorem{prop}[thm]{Proposition}
\newtheorem{defi}[thm]{Definition}
\newtheorem{rem}[thm]{Remark}
\newtheorem{conj}[thm]{Conjecture}
\newcommand{\expec}{\mathbf{E}}
\begin{document}

\title{Feedback-Based Collaborative Secrecy Encoding over Binary Symmetric Channels}\label{chap5}
\author{George~T.~Amariucai, \emph{Member IEEE}~and~Shuangqing~Wei, \emph{Member IEEE}%
\thanks{G. Amariucai is with the Department of ECpE, Iowa State
University. E-mail: gamari@iastate.edu.}%
\thanks{S. Wei is with the Department of ECE, Louisiana State
University. E-mail: swei@ece.lsu.edu.}%
\thanks{This paper was  supported in part by the Board of Regents of Louisiana
under grants LEQSF(2004-08)-RD-A-17.}}
\maketitle

\begin{abstract}

In this paper we propose a feedback scheme for transmitting secret messages between two
legitimate parties, over an eavesdropped communication link. Relative to Wyner's traditional
encoding scheme \cite{wyner1}, our feedback-based encoding often yields larger rate-equivocation
regions and achievable secrecy rates. More importantly, by exploiting the channel
randomness inherent in the feedback channels, our scheme achieves a strictly positive
secrecy rate even when the eavesdropper's channel is less noisy than the legitimate receiver's channel.
All channels are modeled as binary and symmetric (BSC). We demonstrate the versatility of our feedback-based
encoding method by using it in three different configurations: the stand-alone configuration, the mixed configuration
(when it combines with Wyner's scheme \cite{wyner1}), and the reversed configuration. Depending on the channel
conditions, significant improvements over Wyner's secrecy capacity can be observed in all configurations.
\end{abstract}

\begin{IEEEkeywords}
 Eavesdropper Channel, Secrecy Capacity, Binary Symmetric Channels, Feedback.
\end{IEEEkeywords}


\section{Introduction}

In the context of a broadcast channel with confidential messages, it was
shown in \cite{csisz1} that a strictly positive secrecy capacity cannot
be achieved for any arbitrary pair of receiver/eavesdropper channels.
In particular, \cite{dijk1} proves that whenever the eavesdropper's
channel is \emph{less noisy} than the receiver's channel, no secret
messages can be exchanged between the legitimate transmitter and receiver
by the conventional method of \cite{wyner1}.

This motivated several works \cite{lai1}, \cite{maur1}, \cite{ahlsw1}, \cite{ardetsan}, \cite{gunduz} to
focus on alternative methods of achieving positive secrecy rates even when
the legitimate receiver has a worse channel than the eavesdropper. All
these works exploit the idea of  feedback channels.

The simple and interesting method of \cite{lai1} is based
on making the receiver jam the eavesdropper.  The receiver can
subtract its own jamming signal from the received signal, while the
wiretapper is kept totally ignorant of the confidential information flowing between
the legitimate users. The drawback of this approach is that the receiver
has to function in full duplex mode. Although an extension to half-duplex
mode is presented in \cite{lai1} for binary symmetric channels, it relies on the
assumption that the transmission of symbol $0$ is equivalent to the absence
of a  physical signal.  We believe that under this assumption, the binary
symmetric channel is no longer valid as a simplified model for a physical wireless
channel.

More recently, \cite{ardetsan}, \cite{gunduz} both use a secret key
to enhance the secrecy throughput of Wyner's scheme. In \cite{ardetsan} the secret key is
communicated through an error-free secure channel, while in \cite{gunduz} it is transmitted
using Wyner's scheme on the feedback channels (and thus its secrecy is subject to Alice's
feedback channel being better than Eve's). An interesting idea of \cite{gunduz} is to use
time-sharing on the feedback link. Part of the feedback transmission is used to generate the
secret key, while the remaining part is used to transmit random symbols with the
purpose of providing the ``common randomness'' necessary for our secrecy encoding scheme described
in this paper. A mixed secrecy encoding strategy inspired by \cite{myself6} is proposed in
\cite{gunduz}. The main idea behind this strategy is to simultaneously transmit a combination
of secret messages, encoded by different methods.  However, while a message encrypted by a secret
key can be transmitted at the same time as a secret message encoded by Wyner's scheme, the
additional secret message encrypted with the use of a random feedback sequence does not maintain secrecy.
The exact reasons why both Section IV.B. of \cite{myself6} and the proposed schemes of \cite{gunduz} are
incorrect will be revealed in Section \ref{5sect5}. None of the previously mentioned works considers
the impact of feedback transmission
on the overall bandwidth use. This drawback becomes critical in \cite{gunduz}, where it results in
the ``secrecy rates'' bearing no physical meaning, as will be shown in Appendix \ref{5appgunduzwrong}.

The concept of \emph{common randomness} is introduced in \cite{maur1, ahlsw1}.
Such randomness can be acquired if all terminals attempt to decode (note that a necessary
condition is that the eavesdropper cannot decode perfectly) a sequence of random bits,
as for example a data stream transmitted by a satellite at very low signal to noise ratio (SNR)
\cite{maur1}.  Both \cite{maur1} and \cite{ahlsw1} study the case when
the legitimate users agree on a secret key by employing  repetitive
protocols, which are not efficient for regular data transmission.

The idea developed in this paper is inspired by a particular case in
\cite{maur1}.  As an example and motivation for the feedback approach
to secrecy in the classical Alice (transmitter) - Bob (receiver) - Eve
(eavesdropper) scenario, \cite{maur1} develops a scheme where the common
randomness is not received from some external source (like a satellite),
but introduced by Alice herself, and functions as a secret key which
allows Bob to share a secret message with Alice over a public, error
free channel.  Our model changes the roles of Alice and Bob.  Although at
some point we make use of the same concept of public error free channel,
we provide techniques that create such a channel, and show how these techniques impact the overall
secrecy rate. Our results explicitly count the loss in the total rate due to
the transmission of feedback.

While sharing functional similarities with the well-known \emph{one-time
pad} \cite{schneier} encryption scheme, our approach is radically
different in that it requires no secret key to be shared by the legitimate
parties before the initiation of the transmission protocol (except maybe a small
secret key that guarantees authenticity as in \cite{maur1}). Instead it
exploits the channel randomness as means of confusing the eavesdropper.

Our contributions  can be summarized as follows: 
\begin{itemize}

\item  We show how an adaptation of Maurer's scheme \cite{maur1} can be used to achieve a
non-empty rate-equivocation region and hence a strictly positive secrecy rate over
binary symmetric channels (BSCs) even
when the forward channel between Alice and Eve is {\em less noisy} than
the forward channel between Alice and Bob, regardless of the feedback channel
quality between Bob and Alice or Bob and Eve.

\item Our results also indicate how  the forward channel capacities
scale the overall secrecy rate and what penalties are incurred by the transmission
of feedback sequences.

\item We show that even if the forward channel from Alice to Bob is less
noisy than the channel from Alice to Eve, feedback can sometimes further improve
the achievable rate-equivocation region obtained using Wyner's classical
method \cite{wyner1}. This is done by dividing the transmission over the forward
channel into two parts, as in \cite{csisz1}. Thus, we transmit a secret message at a rate
less than the secrecy capacity \cite{wyner1}, and allow room for an additional common message,
which carries information ``encrypted'' with the help of the feedback sequence.
The optimal way of splitting the forward message rate is found numerically.

\item We prove that, for a two-user broadcast channel with both channels binary and
symmetric, the optimal auxiliary random variable of \cite{csisz1} needed to encode both a secret and
a common message into the transmitted sequence has an alphabet of size not more than
three. Moreover, we conjecture that the optimal alphabet is binary. If the auxiliary random
variable is considered to be binary (whether or not this results in loss of optimality),
we prove that the optimal auxiliary channel \cite{csisz1} that links it
to the input of the physical channel is binary and symmetric.

\item Finally, we take our scheme a step further and implement it on the reverse channel
(from Bob to Alice, rather than from Alice to Bob), in order to generate a secret key.
Alice uses this key as a one-time pad for the transmission of a secret message.
\end{itemize}

\begin{figure*}[!ht]
\centering
\includegraphics[scale=0.6]{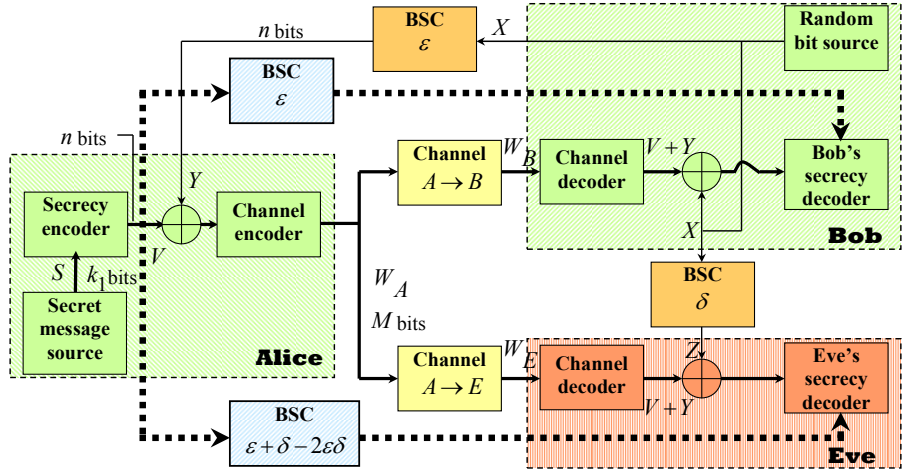}
\caption{System model.}\label{5channelmodelwt}
\end{figure*}

The sequel is organized into seven sections. 
Sections \ref{5sect2} and \ref{5sect3} describe the kernel of our scheme.
Our adaptation of Maurer's idea \cite{maur1}, including the channel model and the
transmission protocol are presented in Section \ref{5sect2} under the assumption that
the forward channels are error free. The public error free channel and the overall
rate-equivocation region are discussed in Section \ref{5sect3} for a general value of the forwarding
rate. Section \ref{5sect4} deals with the special case when the eavesdropper's
forward channel is less noisy than the legitimate receiver's forward channel, while
section \ref{5sect5} extends the model to the case when the eavesdropper's forward channel is worse
than the legitimate receiver's. An alternative scheme, which reverses our protocol to generate a secret key,
is provided in Section \ref{5sect6}. Finally, conclusions are drawn in Section \ref{5sect8}.


\section{The Kernel}

\subsection{The Unscaled Rates}\label{5sect2}

Consider the classical Alice (transmitter) - Bob
(receiver) - Eve (eavesdropper) scenario with binary symmetric channels
(BSCs) between any pair of users. We assume that Eve's only form of interfering with
the transmission is eavesdropping. Although our present treatment is
restrictive to binary channels, the principles and results therein can
be easily extended to more complex models.

The proposed model is depicted in Figure \ref{5channelmodelwt}. The
transmitter (Alice) wants to communicate the outputs of a source
$\mathscr{S}$ of entropy $H_s$ to the legitimate
receiver (Bob), and maintain some level of secrecy towards the wiretapper
(Eve).  The channel $A\to B$ from Alice to Bob is a BSC characterized
by its crossover probability $\epsilon_f$, while the binary symmetric
channel $A\to E$ from Alice to Eve is characterized by the crossover
probability $\delta_f$.  Similarly, the feedback BSCs $B\to A$ (Bob to
Alice) and $B\to E$ (Bob to Eve) are characterized by their crossover
probabilities $\epsilon_b$ and $\delta_b$, respectively.

The transmission protocol associated with the channel model in Figure
\ref{5channelmodelwt} is an adaptation of Maurer's scheme \cite{maur1}
and is described as follows. Bob feeds back a sequence
$\mathbf{x}$ of $n$ bits representing the independent realizations
of a Bernoulli random variable $X$ with expectation $\expec[X]=0.5$.
Since the bits are independent and identically distributed (i.i.d),
Alice's and Eve's estimate of each bit should be based solely on the
corresponding received bit.  Therefore, the bit error probabilities
that affect Alice's and Eve's decoding are $\epsilon_b$ and $\delta_b$
respectively. Denote the feedback sequences received by Alice and Eve
as $\mathbf{y}$ and $\mathbf{z}$, respectively.

At this point, our feedback-based protocol assumes that Alice can share information with
both Bob and Eve through an error free public channel, just like in \cite{maur1}.
The implications of
achieving such an error free channel are discussed in Section \ref{5sect3}.

Since an error free public channel cannot protect Alice's information
from the eavesdropper Eve, the protocol has to artificially create a pair
of channels that are adequate for the transmission of secret messages.

For this purpose, if Alice needs to send an n-dimensional sequence $\mathbf{v}$ to Bob,
she first computes $\mathbf{v}\oplus\mathbf{y}$, where $\oplus$ denotes
addition $\mod 2$, and feeds this signal through the error free channel.
Since $\mathbf{x}$ is a sequence of i.i.d. symbols with a uniform distribution over $\{0,1\}$,
the same property holds for the BSC output $\mathbf{y}$ and,
by the \emph{Crypto lemma}\footnote{Special care should be applied when using the Crypto lemma \cite{erez2}. For instance,
if $\mathcal{C}$ is a compact Abelian group and $X$ and $E$ are random variables over $\mathcal{C}$ such that
$X$ is \emph{independent of $E$} and uniformly distributed over $\mathcal{C}$, then $X+E$ is uniform and \emph{independent of $E$}.
However, $E$ is not independent of $(X,X+E)$.} \cite{erez2}, \cite{lai1}, for $\mathbf{v}\oplus\mathbf{y}$.

Both Bob and Eve receive $\mathbf{v}\oplus\mathbf{y}$ with no errors.
In order to obtain the original message $\mathbf{v}$, the optimal strategy for Bob is to compute
$\mathbf{v}\oplus\mathbf{y}\oplus \mathbf{x}$, while Eve's best strategy is to compute $\mathbf{v}\oplus\mathbf{y}
\oplus \mathbf{z}$ \cite{maur1}.

As a consequence, a bit error probability of $\epsilon_B=\epsilon_b$ will affect Bob's estimate of $\mathbf{v}$, while a bit error
probability of $ \epsilon_E=\epsilon_b+\delta_b-2\epsilon_b\delta_b$ will affect Eve's estimate \cite{maur1}.
The result is an equivalent system in which Eve's channel is a degraded version of Bob's channel, and
which is therefore adequate for the transmission of secret messages from Alice to Bob.
In other words, standard secrecy encoding can be performed for this equivalent system so that 
the $n$-sequence $\mathbf{v}$ carries a secret message $\mathbf{s}^{k_1}$ (which will hence forth be represented as a sequence of
$k_1$ source symbols). A total transmission rate arbitrarily close to 
\be 
R_{t,u}=1-h(\epsilon_b)
\ee
can be achieved as $n\to\infty$, where $h(\cdot)$ represents the binary entropy function
$h(x)=-x\log_2(x)-(1-x)\log_2(1-x)$.

We shall now restate some of the definitions in \cite{wyner1}
and then show how Theorem 2 of \cite{wyner1} can be readily applied to our scenario.

\begin{defi}
The equivocation of the source $\mathscr{S}$ of entropy $H_s$ at Eve is defined as:
\be
\Delta=\frac{1}{k}H(\mathbf{s}^{k}|\mathbf{w_E^M}),
\ee
where the sequence $\mathbf{s}^k$ of $k$ source symbols are encoded
into a codeword $\mathbf{w_A^M}$ of length $M$ which is transmitted over the broadcast channel, and 
received by Eve as $\mathbf{w_E^M}$.
\end{defi}

\begin{defi}\label{5defwyner}
The rate-equivocation pair $(R,d)$ is achievable if for any $\nu>0$ there exists an $(M,k,\Delta,\overline{P_e})$ code as defined in
\cite{wyner1} such that:
\be
\begin{array}{lcr}
\frac{kH_s}{M}\geq R-\nu,
&
\Delta\geq d-\nu,
&
\overline{P_e}\leq \nu
\end{array}
\ee
where $\overline{P_e}$ is the average error probability in decoding for $\mathbf{s}^{k}$ at Bob.
\end{defi}

\begin{thm} \label{5theorem2wyner} (\emph{Theorem 2 from \cite{wyner1}})
A rate-equivocation pair $(R,d)$ is achievable for Wyner's scheme with 
\emph{discrete memoryless symmetric channels} if and only if
\be
\begin{array}{lcr}
0\leq R\leq C_M,&
0\leq d\leq H_s,&
Rd\leq H_s C_s,
\end{array}
\ee
where $C_s=C_M-C_{MW}$ is the secrecy capacity 
(representing the maximum rate at which the outputs of the source $\mathscr{S}$ can be conveyed from Alice to Bob,
while remaining perfectly secret to Eve) achievable by Wyner's scheme in this case, $C_M$ is the capacity of Bob's channel,
and $C_{MW}$ is the capacity of Eve's channel.
\end{thm}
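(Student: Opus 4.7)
The plan is to recognize Theorem \ref{5theorem2wyner} as a specialization of Wyner's Theorem 2 \cite{wyner1} to discrete memoryless symmetric channels, so the proof breaks naturally into achievability and converse parts. The symmetry assumption will be used to force the uniform input distribution to be simultaneously optimal for both Bob's and Eve's channels, so that $C_s = C_M - C_{MW}$ is realized by a single input law and the auxiliary random variable from the general wiretap converse collapses to the channel input itself.

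For achievability I would handle two regimes separately. When $R \leq C_s$, a direct application of Wyner's wiretap coding produces a reliable $k$-symbol source transmission to Bob at rate $R$ with Eve's equivocation $\Delta$ arbitrarily close to $H_s$, and the product constraint $Rd \leq H_s C_s$ is then automatic. In the regime $C_s \leq R \leq C_M$ I would split the $k$ source symbols into $k_1$ ``secret'' and $k_2 = k - k_1$ ``common'' symbols, choosing $k_1 H_s / M \approx C_s$ and $k_2 H_s / M \approx R - C_s$. The secret block is wiretap-encoded so that Eve learns essentially nothing about it, while the common block is appended using a standard capacity-achieving code for Bob's channel. A short calculation then gives $\Delta \approx k_1 H_s / k$ and $R \approx k H_s / M$, so $R\Delta \approx H_s C_s$, landing on the claimed boundary. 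Interior points of the region are covered by time-sharing or by padding with dummy channel uses.

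For the converse, the bound $R \leq C_M$ is Shannon's channel coding theorem for Bob's channel combined with $\overline{P_e} \leq \nu$ and Fano, and $d \leq H_s$ is the trivial bound $H(\mathbf{s}^{k}|\mathbf{w_E^M}) \leq H(\mathbf{s}^{k}) \leq k H_s$. The substantive step is $Rd \leq H_s C_s$. Here I would combine Fano's inequality, which yields $H(\mathbf{s}^{k}|\mathbf{w_B^M}) \leq k\nu$, with the standard wiretap single-letterization $I(\mathbf{s}^{k};\mathbf{w_B^M}) - I(\mathbf{s}^{k};\mathbf{w_E^M}) \leq M C_s$. Writing $H(\mathbf{s}^{k}|\mathbf{w_E^M}) = k H_s - I(\mathbf{s}^{k};\mathbf{w_E^M})$ and chaining these estimates gives $kd \leq M C_s + 2k\nu$; multiplying through by $H_s/M$ yields $Rd \leq H_s C_s + 2R\nu$, and letting $\nu \to 0$ completes the converse.

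The main obstacle is the single-letterization step in the converse. In general broadcast channels this bound requires an auxiliary random variable (and indeed the paper works hard in Section \ref{5sect5} to control the alphabet size of such a variable in the mixed-message setting). Here, however, the symmetry hypothesis forces the uniform input distribution to be simultaneously optimal on both legs of the cascade, so the auxiliary random variable can be identified with the channel input itself and the clean difference-of-capacities $C_s = C_M - C_{MW}$ emerges on the right-hand side. Carefully justifying this collapse of the auxiliary, and checking that the normalizations in Definition \ref{5defwyner} are preserved throughout, would be the step I would spend the most attention on.
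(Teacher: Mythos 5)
First, a point of orientation: the paper does not prove this statement at all. It is quoted verbatim as Wyner's Theorem 2 and used as a black box (the only in-paper argument attached to it is the short derivation of Corollary \ref{5cor1wyner}). So there is no in-paper proof to compare against; what you have written is a reconstruction of Wyner's original argument, and it should be judged on its own terms.

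Your converse is essentially sound: Fano for $R\leq C_M$, the trivial bound for $d\leq H_s$, and the chain $kd\leq H(\mathbf{s}^k|\mathbf{w_E^M})+k\nu = kH_s - I(\mathbf{s}^k;\mathbf{w_B^M}) + [I(\mathbf{s}^k;\mathbf{w_B^M})-I(\mathbf{s}^k;\mathbf{w_E^M})]+k\nu \leq MC_s + O(k\nu)$ does give $Rd\leq H_sC_s$ in the limit. One remark on your "main obstacle": in Wyner's setting the wiretap channel is physically degraded, so the single-letterization $I(\mathbf{s}^k;\mathbf{w_B^M})-I(\mathbf{s}^k;\mathbf{w_E^M})\leq M\max_p[I(X;Y)-I(X;Z)]$ needs no auxiliary variable at all; symmetry is only needed to identify the maximizer as the uniform input so that the right-hand side equals $M(C_M-C_{MW})$. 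The "collapse of the auxiliary" is therefore not the delicate step you fear it is.

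The genuine gap is in your achievability for $C_s< R\leq C_M$. As literally described -- wiretap-encode the $k_1$ secret symbols and \emph{append} the $k_2$ common symbols "using a standard capacity-achieving code for Bob's channel" -- this is a time-sharing construction over $M=M_1+M_2$ channel uses, and it does \emph{not} reach the boundary: the secret block then occupies only $M_1<M$ uses, so $k_1H_s\approx M_1C_s$ and $R\Delta\approx k_1H_s^2/M = (M_1/M)H_sC_s < H_sC_s$. Your own bookkeeping ($k_1H_s/M\approx C_s$ with $M$ the \emph{total} blocklength) silently presupposes the opposite construction, namely Wyner's single-codebook coset scheme in which the secret symbols index the bin and the non-secret symbols index the representative \emph{within} the bin, so that all $M$ channel uses carry both messages simultaneously. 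That embedded construction is exactly what the surrounding paper relies on (it is the mechanism behind the mixed scheme of Section \ref{5sect5} and the discussion in Appendix \ref{5appgunduzwrong}), and it is the step you need to state correctly; with it, the computation $\Delta\approx k_1H_s/k$, $R\approx kH_s/M$, $R\Delta\approx H_sC_s$ goes through as you wrote it.
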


The following corollary, which will prove useful in the sequel,
follows directly from Theorem \ref{5theorem2wyner} and Definition \ref{5defwyner}.

\begin{corol}\label{5cor1wyner}
If $(R,d)$ is an achievable rate-equivocation pair, then as $M\to\infty$ the number of secret source symbols $k$ that can be encoded
into the $M$-sequence $\mathbf{w_A^M}$ can approach the upper-bound
\be
k_u=\frac{MC_s}{d}.
\ee
\end{corol}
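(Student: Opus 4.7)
The plan is to obtain the claim as an essentially algebraic consequence of Theorem \ref{5theorem2wyner} combined with Definition \ref{5defwyner}, with the only real bookkeeping being the dual role played by the rate $R$: once as a target parameter in the definition of achievability, and once as the rate induced by the actual code.

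First, I would read off from Definition \ref{5defwyner} that any code witnessing achievability of $(R,d)$ encodes source symbols at the effective rate $R_{\mathrm{eff}} = kH_s/M$, which satisfies $R_{\mathrm{eff}} \geq R - \nu$. In particular, the pair $(R_{\mathrm{eff}}, d)$ is itself achieved by the same code (up to the same vanishing slack $\nu$), and so must satisfy the necessary condition from Theorem \ref{5theorem2wyner}, namely
\begin{equation}
R_{\mathrm{eff}}\, d \leq H_s C_s.
\end{equation}
Substituting $R_{\mathrm{eff}} = kH_s/M$ and rearranging gives $k \leq MC_s/d = k_u$, which establishes the upper bound.

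For the approachability half, I would invoke the ``if'' direction of Theorem \ref{5theorem2wyner}. For any $R' < H_s C_s/d$, the pair $(R', d)$ is achievable, so there exists, for every $\nu > 0$, a code with $kH_s/M \geq R' - \nu$, i.e., $k \geq M(R' - \nu)/H_s$. Letting $R' \uparrow H_s C_s/d$, $\nu \downarrow 0$, and $M \to \infty$ produces codes in which $k$ gets arbitrarily close to $k_u = MC_s/d$.

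There is no real obstacle; the only point worth being careful about is that the rate $R$ in the statement of the corollary is the target rate from the achievability definition, while the upper bound $k_u$ is governed by the worst-case rate permitted by Theorem \ref{5theorem2wyner} at equivocation $d$. Once this identification is made explicit (and provided $d > 0$, which is implicit in dividing by $d$), both the upper bound and its tightness fall out directly.
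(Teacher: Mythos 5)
Your proposal is correct and follows essentially the same route as the paper: both arguments read off $kH_s/M \geq R-\nu$ from Definition~\ref{5defwyner}, push $R$ to the boundary $Rd = H_sC_s$ of the region in Theorem~\ref{5theorem2wyner}, and let $\nu\to 0$, $M\to\infty$ to get $k \to MC_s/d$. The only difference is that you also spell out the converse step showing $k_u$ really is an upper bound (via $R_{\mathrm{eff}}d\leq H_sC_s$), which the paper takes for granted; that addition is harmless and, if anything, makes the statement's use of the word ``upper-bound'' more honest.
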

\begin{proof}
By Theorem \ref{5theorem2wyner} and Definition \ref{5defwyner} we have
$\frac{kH_s}{M}\geq R-\nu$, which implies $\frac{kd}{M}\geq \frac{1}{H_s}(Rd-\nu d)$, and
taking the limit $Rd=H_s C_s$ we get $\frac{kd}{M}\geq C_s-\frac{\nu d}{H_s}$.
But according to Theorem \ref{5theorem2wyner}, we have $d\leq H_s$, hence, as $\nu\to 0$, if we pick a large
enough $M$ we can obtain $\frac{kd}{M}\to C_s$, or equivalently $k \to \frac{MC_s}{d}$.
\end{proof}

If we apply Theorem \ref{5theorem2wyner} to the pair of equivalent channels derived above, we can conclude that
there exists a $(n,k_1,\Delta_1,\overline{P_{e,1}})$ code satisfying
$\frac{k_1H_s}{n}\geq R-\nu$,
$\Delta_1\geq d-\nu$,
and $\overline{P_{e,1}}\leq \nu$
if and only if
$0\leq R\leq R_{t,u}$,
$0\leq d\leq H_s$,
$Rd\leq H_s R_{s,u}$,
where $R_{s,u}$ is the maximum achievable \emph{secrecy rate} of \cite{wyner1, csisz1}:
\be \label{5Rsu}
R_{s,u}=h(\epsilon_b+\delta_b-2\epsilon_b\delta_b)-h(\epsilon_b).
\ee

Several comments are in order.
First, note that $R_{s,u}>0$ -- and therefore the rate-equivocation region as defined in
\cite{wyner1} is non-empty -- unless $\delta_b\in\{0,1\}$ (the assumption that feedback channels exist
implies $\epsilon_b \neq 0.5$)

Second, the rates $R_{t,u}$ and $R_{s,u}$ do not represent the \emph{overall} transmission and secrecy rates of our model,
since a pair of binary symmetric channels such as the forward $A\to B$ and $A\to E$ channels cannot provide error free
transmission at infinite rates. The information encoded in the sequence $\mathbf{v}$ mentioned above has to
pass through one of these channels in order to be available at the other two terminals.
While this ``correction'' will be considered in Section \ref{5sect3}, we shall denote the rates $R_{t,u}$ and $R_{s,u}$ as
\emph{the unscaled transmission and secrecy rates}, respectively.

Third, note that under the above protocol, an independent feedback sequence $\mathbf{x}$
is transmitted every time for each new information-carrying sequence $\mathbf{v}$.
Eve's resulting error sequence is always different and independent, and acts like a \emph{one-time pad}
\cite{schneier}. As is the case with a one-time pad, the feedback sequence cannot be recycled.
If only one feedback sequence is transmitted and used for a set of several messages, Eve's equivocation about
the whole set will be the same as her equivocation about any one message in the set.

Therefore, an additional rate penalty has to be introduced to address the channel uses required for the
feedback of $\mathbf{x}$, as will be shown in Section \ref{5sect3}.


\subsection{The Overall Rate-Equivocation Region and Secrecy Rate}\label{5sect3}

This section shows how the overall transmission rates of our model depend on the \emph{unscaled}
rates of the equivalent system presented in Section \ref{5sect2} and on the transmission rates used over the
forward binary symmetric channels.

In Section \ref{5sect2} we showed that, if feedback is allowed, we can artificially form an equivalent system that
allows encoding by Wyner's scheme \cite{wyner1}. All that is needed is an error free
public channel to support the transmission of the $n$-sequence $\mathbf{v}\oplus\mathbf{y}$.
By the channel coding theorem, this channel is readily available if $\mathbf{v}\oplus\mathbf{y}$
is transmitted at a rate $R_{AB,fb}$ (the notation stands for the rate at which the feedback
processed signal is transmitted from Alice to Bob) less than the capacity of the $A\to B$
channel $C_{AB}=1-h(\epsilon_f)$.

\begin{prop}\label{prop1IIB}
There exists a channel code $(M,n,P_{e,c})$ (where $n$ is the size of the message, $M$ is the size of the codeword and
$P_{e,c}$ is the code's average error probability) that can transport the sequence $\mathbf{v}\oplus\mathbf{y}$
over the forward channel in such a manner that the secret message $\mathbf{s}^{k_1}$ is recovered
with asymptotically no errors by Bob.
\end{prop}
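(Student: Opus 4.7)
The plan is to exhibit a two-layer concatenated scheme whose outer layer is the Wyner-style secrecy encoder for the equivalent channel pair derived in Section \ref{5sect2}, and whose inner layer is a classical channel code for the forward BSC with crossover $\epsilon_f$. First I would let the secret message $\mathbf{s}^{k_1}$ be mapped by the outer Wyner encoder into the $n$-sequence $\mathbf{v}$, which, as shown in Section \ref{5sect2}, is a secrecy codeword for the artificial pair $(\epsilon_B,\epsilon_E)=(\epsilon_b,\epsilon_b+\delta_b-2\epsilon_b\delta_b)$. Alice then computes $\mathbf{v}\oplus\mathbf{y}$ using the feedback sequence $\mathbf{y}$ she has observed, and treats this bit string of length $n$ as the "message" fed to the inner channel encoder.

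For the inner layer I would invoke Shannon's channel coding theorem for the BSC: for any rate $R_{AB,fb}=n/M<C_{AB}=1-h(\epsilon_f)$, there exists a sequence of block codes $(M,n,P_{e,c})$ with $P_{e,c}\to 0$ as $M\to\infty$. On reception, Bob runs the inner decoder to produce an estimate of $\mathbf{v}\oplus\mathbf{y}$; because Bob himself generated and therefore knows $\mathbf{x}$ exactly, he then XORs $\mathbf{x}$ into this estimate to obtain $\mathbf{v}\oplus\mathbf{x}\oplus\mathbf{y}$, which (conditional on a correct inner decoding) differs from $\mathbf{v}$ only by the BSC($\epsilon_b$) feedback noise. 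Finally, he applies the outer Wyner decoder to recover $\mathbf{s}^{k_1}$.

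The error analysis would proceed by a union bound over two events: the inner-decoder failure, whose probability is at most $P_{e,c}$, and the outer-decoder failure conditional on correct inner decoding, whose probability is $\overline{P_{e,1}}\leq\nu$ by Theorem \ref{5theorem2wyner} applied to the equivalent channel pair of Section \ref{5sect2}. Both quantities vanish under the appropriate asymptotic scaling, yielding an overall $(M,n,P_{e,c})$ code with arbitrarily small average error probability for $\mathbf{s}^{k_1}$ at Bob. The main obstacle I anticipate is not any single inequality but rather pinning down a consistent joint-asymptotic regime: one has to let $n\to\infty$ so that the outer Wyner error $\overline{P_{e,1}}$ vanishes, while simultaneously letting $M\to\infty$ with $M/n\to 1/R_{AB,fb}$ so that the inner error $P_{e,c}$ vanishes. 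This can be handled in the standard way by diagonalizing over a sequence of block lengths, picking for each $\nu>0$ first an $n$ large enough that the outer code meets the equivocation and error targets, and then an $M$ large enough that the inner code delivers $\mathbf{v}\oplus\mathbf{y}$ with error below $\nu$.
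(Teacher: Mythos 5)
Your proposal is correct and follows essentially the same route as the paper: an outer Wyner secrecy code for the equivalent feedback-induced channel pair concatenated with an inner forward-channel code at rate below $C_{AB}$, decoded separately (channel decode, XOR with $\mathbf{x}$, then secrecy decode), with the two vanishing error probabilities combined at the end. The only cosmetic difference is that you combine the two error events by a union bound giving $\overline{P_e}\leq 2\nu$, whereas the paper lower-bounds the probability of joint success by $(1-\nu)^2$ through an explicit chain of inequalities; both yield the same asymptotic conclusion.
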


\begin{proof}
Denote the error sequences introduced by the feedback channels by
$\mathbf{e_{bA}}$ -- for Alice -- and $\mathbf{e_{bE}}$ -- for Eve.
According to \cite{wyner1} if the rate of the secret message is less than $R_{s,u}$, then
there exists an encoding/decoding technique such that for any $\nu>0$ there exists $N_0>0$ such that
the average probability of correctly decoding for the secret message $\mathbf{s}^{k_1}$ is
\be\label{5wynercodingthm}
\sum_{\mathbf{s}}Pr\{\mathbf{s}^{k_1}\}\sum_{\mathbf{v},\mathbf{e_{bA}}}Pr\{\mathbf{e_{bA}}\}Pr\{\mathbf{v}|\mathbf{s}^{k_1}\}
\cdot{}\nonumber\\ \cdot Pr\{\psi (\mathbf{v}\oplus\mathbf{e_{bA}})=\mathbf{s}^{k_1}\}\geq 1-\nu
\ee
for $n>N_0$, where $\psi (\cdot )$ is Bob's secrecy decoder.

Moreover, according to Gallager's second corollary of Theorem 5.6.2. \cite{gallager}, there exists a $(M,n,P_{e,c})$ code
for Bob's forward channel with the property that if the transmission rate is $\frac{n}{M}=R_{AB,fb}<C_{AB}$, then 
for any $\nu>0$ there exists $N_1>0$ such that the average probability of correctly decoding a given
transmitted message $\mathbf{t}$ is
\be\label{5gallagercodingthm}
1-P_{e,c}=\sum_{\mathbf{w_B},\mathbf{t}}Pr\{ \mathbf{t} \}Pr\{\mathbf{w_B}|\mathbf{t}\}Pr\{\phi (\mathbf{w_B})=\mathbf{t}\}\geq 1-\nu
\ee
for $n>N_1$, where $\phi (\cdot )$ is Bob's channel decoder and $\mathbf{w_B}$ is Bob's received sequence over
the forward channel (when $\mathbf{w_A}$ is transmitted by Alice).
Note that our decoding method consists of separate channel and secrecy decoding. That is, Bob estimates the secret message
$s$, as $\widehat{s}=\psi(\phi(\mathbf{w_B})\oplus \mathbf{x})$. There is no guarantee that this separate decoding method is optimal.
We define Bob's optimal (joint) decoder $\xi(\cdot)$, yielding the optimal estimate $\widetilde{s}=\xi(w_B)$.
Given the feedback sequence $\mathbf{x}$, we can lower bound 
\be\label{5problowbound}
Pr\{\xi (\mathbf{w_B})=\mathbf{s}^{k_1}\}\geq{} \nonumber\\
{}\geq \sum_{\mathbf{t}} Pr\{\phi (\mathbf{w_B})=\mathbf{t}\}Pr\{\psi (\mathbf{t\oplus \mathbf{x}})=\mathbf{s}^{k_1}\}.
\ee

Thus given the feedback sequence $\mathbf{x}$,
Bob's average probability of correctly decoding for the secret message $\mathbf{s}^{k_1}$ can be lower bounded as
\be
\sum_{\mathbf{s}^{k_1}}Pr\{\mathbf{s}^{k_1}\}\sum_{\mathbf{v},\mathbf{e_{bA}}}
Pr\{\mathbf{e_{bA}}\}Pr\{\mathbf{v}|\mathbf{s}^{k_1}\}\sum_{\mathbf{x}}Pr\{\mathbf{x}\}\cdot{}\nonumber\\
{}\cdot \sum_{\mathbf{w_B}}  Pr\{\mathbf{w_B}|\mathbf{v}\oplus\mathbf{e_{bA}}\oplus\mathbf{x}\}Pr\{\xi (\mathbf{w_B})
=\mathbf{s}^{k_1}\}\stackrel{(a)}{\geq} {}\nonumber\\ 
{}\geq \sum_{\mathbf{s}^{k_1}}Pr\{\mathbf{s}^{k_1}\}\sum_{\mathbf{v},\mathbf{e_{bA}}}
Pr\{\mathbf{e_{bA}}\}Pr\{\mathbf{v}|\mathbf{s}^{k_1}\}\sum_{\mathbf{x}}Pr\{\mathbf{x}\}\cdot{}\nonumber\\
{}\cdot \sum_{\mathbf{w_B}}  Pr\{\mathbf{w_B}|\mathbf{v}\oplus\mathbf{e_{bA}}\oplus\mathbf{x}\}
\sum_{\mathbf{t}} Pr\{\phi (\mathbf{w_B})=\mathbf{t}\}\cdot{}\nonumber\\
{}\cdot Pr\{\psi (\mathbf{t\oplus \mathbf{x}})=\mathbf{s}^{k_1}\}\stackrel{(b)}{\geq} {}\nonumber\\ 
{}\geq \sum_{\mathbf{s}^{k_1}}Pr\{\mathbf{s}^{k_1}\}\sum_{\mathbf{v},\mathbf{e_{bA}}}
Pr\{\mathbf{e_{bA}}\}Pr\{\mathbf{v}|\mathbf{s}^{k_1}\}\sum_{\mathbf{x}}Pr\{\mathbf{x}\}\cdot{}\nonumber\\
{}\cdot \sum_{\mathbf{w_B}}  Pr\{\mathbf{w_B}|\mathbf{v}\oplus\mathbf{e_{bA}}\oplus\mathbf{x}\}
\cdot{}\nonumber\\
{}\cdot Pr\{\phi (\mathbf{w_B})=\mathbf{v}\oplus\mathbf{e_{bA}}\oplus\mathbf{x}\}\cdot{}\nonumber\\
{}\cdot Pr\{\psi (\mathbf{\mathbf{v}\oplus\mathbf{e_{bA}}})\}
=\mathbf{s}^{k_1}\}\stackrel{(c)}{=}{}\nonumber\\ 
{}= \sum_{\mathbf{s}^{k_1}}Pr\{\mathbf{s}^{k_1}\}\sum_{\mathbf{v},\mathbf{e_{bA}}}
Pr\{\mathbf{e_{bA}}\}Pr\{\mathbf{v}|\mathbf{s}^{k_1}\}\cdot{}\nonumber\\
{}\cdot Pr\{\psi (\mathbf{\mathbf{v}\oplus\mathbf{e_{bA}}})\}
=\mathbf{s}^{k_1}\} \sum_{\mathbf{x}}Pr\{\mathbf{x}\}\sum_{\mathbf{w_B}} {}\nonumber\\
{} Pr\{\mathbf{w_B}|\mathbf{v}\oplus\mathbf{e_{bA}}\oplus\mathbf{x}\}\cdot{}\nonumber\\
{}\cdot Pr\{\phi (\mathbf{w_B})=\mathbf{v}\oplus\mathbf{e_{bA}}\oplus\mathbf{x}\}\stackrel{(d)}{\geq} {}\nonumber\\ 
{}\geq (1-\nu)\sum_{\mathbf{s}^{k_1}}Pr\{\mathbf{s}^{k_1}\}\sum_{\mathbf{v},\mathbf{e_{bA}}}
Pr\{\mathbf{e_{bA}}\}Pr\{\mathbf{v}|\mathbf{s}^{k_1}\}\cdot{}\nonumber\\
{}\cdot Pr\{\psi (\mathbf{\mathbf{v}\oplus\mathbf{e_{bA}}})\}
=\mathbf{s}^{k_1}\}\stackrel{(e)}{\geq} (1-\nu)^2.
\ee
Inequality $(a)$ follows from (\ref{5problowbound}), inequality $(b)$ from the fact that
$\sum_{t} F(\mathbf{t})\geq F(\mathbf{t})|_{\mathbf{t}=\mathbf{v}\oplus\mathbf{e_{bA}}\oplus\mathbf{x}}$ for
any positive function $F$, while the equality $(c)$ from simply re-arranging the terms.
In inequality $(d)$ we used (\ref{5gallagercodingthm})
and the fact that $ Pr\{\mathbf{v}\oplus\mathbf{e_{bA}}\oplus\mathbf{x}\}=Pr\{\mathbf{x}\}$ (due to the Crypto
lemma \cite{erez2}), while inequality $(e)$ follows directly from (\ref{5wynercodingthm}).
The resulting average error probability at Bob is thus
\be
\overline{P_e}<2\nu-\nu^2,
\ee
which goes to zero as $\nu\to 0$.
\end{proof}

Denote $C_{AE}=1-h(\delta_f)$ the capacity of Eve's forward channel.
Note that if $C_{AE}\geq C_{AB}$, Eve will also be able to decode the sequence $\mathbf{v}\oplus\mathbf{y}$ with
no errors asymptotically. However, Eve's equivocation about the secret message $\mathbf{s}^{k_1}$ is maintained due to the feedback
processing. On the other hand, if $C_{AE}<C_{AB}$, Eve cannot decode for the message $\mathbf{v}\oplus\mathbf{y}$.
Under this scenario, a secret message can be transmitted from Alice to Bob by Wyner's scheme, without using any
feedback. The optimal tradeoff between the rate of encoding a secret message directly through Wyner's scheme and
the rate $R_{AB,fb}$ at which a feedback-processed secret message should be forwarded to Bob will be discussed in
Section \ref{5sect5}. In what follows, we prove that Eve's equivocation about the feedback-processed secret message
$\mathbf{s}^{k_1}$ is maintained regardless of the forwarding rate $R_{AB,fb}$.

\begin{prop}\label{prop2IIB}
Eve's equivocation about the secret message does not decrease because of channel coding for the forward channel.
\end{prop}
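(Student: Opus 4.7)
The plan is to reduce the claim to a standard data processing inequality. Intuitively, Eve's view $\mathbf{w_E}$ of the forward transmission is obtained by passing the codeword $\mathbf{w_A}$ — which is produced by the channel encoder acting only on $\mathbf{v}\oplus\mathbf{y}$ — through a BSC whose noise $\mathbf{e_{fE}}$ is independent of everything else in the scheme. In particular, $\mathbf{e_{fE}}$ is independent of $\mathbf{s}^{k_1}$, $\mathbf{v}$, $\mathbf{y}$, $\mathbf{x}$, $\mathbf{e_{bA}}$, $\mathbf{e_{bE}}$, and hence of Eve's feedback observation $\mathbf{z}$. Any private randomization that the channel encoder might employ is generated locally at Alice and is independent of both the source and of Eve's variables. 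Consequently, conditioned on $\mathbf{v}\oplus\mathbf{y}$, the pair $(\mathbf{s}^{k_1},\mathbf{z})$ is independent of $\mathbf{w_E}$, giving the Markov chain
\begin{equation*}
\mathbf{s}^{k_1}\;\to\;(\mathbf{v}\oplus\mathbf{y},\,\mathbf{z})\;\to\;\mathbf{w_E}.
\end{equation*}

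With that Markov chain in hand, the rest is a two-line computation. From $I(\mathbf{s}^{k_1};\mathbf{w_E}\mid \mathbf{v}\oplus\mathbf{y},\mathbf{z})=0$ we obtain
\begin{equation*}
H(\mathbf{s}^{k_1}\mid \mathbf{v}\oplus\mathbf{y},\mathbf{z},\mathbf{w_E})=H(\mathbf{s}^{k_1}\mid \mathbf{v}\oplus\mathbf{y},\mathbf{z}),
\end{equation*}
and the standard fact that conditioning cannot increase entropy gives $H(\mathbf{s}^{k_1}\mid \mathbf{v}\oplus\mathbf{y},\mathbf{z},\mathbf{w_E})\leq H(\mathbf{s}^{k_1}\mid \mathbf{w_E},\mathbf{z})$. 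Chaining these yields
\begin{equation*}
H(\mathbf{s}^{k_1}\mid \mathbf{w_E},\mathbf{z})\;\geq\; H(\mathbf{s}^{k_1}\mid \mathbf{v}\oplus\mathbf{y},\mathbf{z}),
\end{equation*}
which is exactly the proposition: Eve's equivocation after forward-channel coding (left-hand side) is no smaller than her equivocation in the idealized error-free setting of Section \ref{5sect2} (right-hand side), which in turn was already shown to satisfy the $R_{s,u}$ bound of \eqref{5Rsu}.

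The only real care needed is the bookkeeping for the Markov chain: one must check that the stochastic kernel taking $\mathbf{v}\oplus\mathbf{y}$ to $\mathbf{w_E}$ genuinely factors through $\mathbf{v}\oplus\mathbf{y}$ alone, i.e., that neither the encoder's private randomness nor the forward channel noise $\mathbf{e_{fE}}$ leaks information about $\mathbf{s}^{k_1}$ or $\mathbf{z}$ that is not already captured by $\mathbf{v}\oplus\mathbf{y}$. This is an immediate consequence of the protocol's clean separation between secrecy encoding, feedback processing, and channel coding, so there is no substantive obstacle — once the independence structure is spelled out, the proposition reduces to one invocation of the data processing inequality.
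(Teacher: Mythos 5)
Your proof is correct and follows essentially the same route as the paper's: a data processing inequality for the Markov chain $\mathbf{s}^{k_1}\to(\mathbf{v}\oplus\mathbf{y},\mathbf{z})\to\mathbf{w_E}$, yielding $H(\mathbf{s}^{k_1}\mid\mathbf{w_E},\mathbf{z})\geq H(\mathbf{s}^{k_1}\mid\mathbf{v}\oplus\mathbf{y},\mathbf{z})$. The only difference is that the paper makes one further step explicit --- rewriting $\mathbf{v}\oplus\mathbf{y}$ as $\mathbf{v}\oplus\mathbf{e_{bA}}\oplus\mathbf{e_{bE}}\oplus\mathbf{z}$ and invoking the Crypto lemma to drop the conditioning on $\mathbf{z}$, so that the lower bound is identified exactly with the equivocation $H(\mathbf{s}^{k_1}\mid\mathbf{v}\oplus\mathbf{e_{bA}}\oplus\mathbf{e_{bE}})$ of the equivalent degraded system --- whereas you defer that identification to the informal description in Section \ref{5sect2}.
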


\begin{proof}
Let $\mathbf{w_E}$ denote Eve's received signal over the forward channel
and $\mathbf{s}^{k_1}$ denote the secret message.
Also, recall the error sequences corresponding to the feedback channels were denoted by
$\mathbf{e_{bA}}$ (for Alice's feedback channel) and $\mathbf{e_{bE}}$ (for Eve's feedback channel).

Eve's equivocation about the secret message is
\be
H(\mathbf{s}^{k_1}|\mathbf{w_E},\mathbf{x}\oplus\mathbf{e_{bE}}) \geq
H(\mathbf{s}^{k_1}|\mathbf{v}\oplus\mathbf{y},\mathbf{x}\oplus\mathbf{e_{bE}})={}\nonumber\\
{}=H(\mathbf{s}^{k_1}|\mathbf{v}\oplus\mathbf{e_{bE}}\oplus\mathbf{e_{bA}},\mathbf{x}\oplus\mathbf{e_{bE}})={}\nonumber\\
=H(\mathbf{s}^{k_1}|\mathbf{v}\oplus\mathbf{e_{bE}}\oplus\mathbf{e_{bA}}),
\ee
where the inequality follows since $\mathbf{s}^{k_1}\to \mathbf{v}\oplus\mathbf{y} \to \mathbf{w_E}$ form a Markov chain,
and the last equality is due to the Crypto lemma \cite{erez2} and the fact that the probability distribution of
$\mathbf{x}$ is uniform over $\{0,1\}^n$ (implying that $\mathbf{x}\oplus\mathbf{e_{bE}}$ is independent
of $(\mathbf{s}^{k_1},~\mathbf{v}\oplus\mathbf{e_{bE}}\oplus\mathbf{e_{bA}})$).
Hence Eve's equivocation can only increase because of the imperfect forward channels.
\end{proof}

The impact of the forward channel finite transmission rate on the overall achievable rates is reflected in a scaling
of the \emph{unscaled} rates by the rate used over the forward link $R_{AB,fb}$.
That is, a sequence of $m_1$ bits carrying $k_1=nR_{s,u}/H_s$ secret symbols is mapped to an $n$-sequence
$\mathbf{v}$ by  Alice's secrecy encoder, such that $\frac{m_1}{n}\approx R_{t,u}$.
Next, Alice computes $\mathbf{v}\oplus\mathbf{y}$, and feeds this signal to the channel encoder.
Since $\mathbf{v}\oplus\mathbf{y}$ is a sequence of i.i.d. uniform bits (as shown in Section \ref{5sect2}),
its error free transmission requires an approximate number of $M=\frac{n}{R_{AB,fb}}$ channel uses.
Hence, the $m_1$ source bits are transmitted in $M$ channel uses.

An additional number of $n$ channel uses have to be considered for the transmission of the required feedback
sequence $\mathbf{x}$.
Noting that $\frac{n}{M+n}=\frac{R_{AB,fb}}{R_{AB,fb}+1}$, we can state the following result.

\begin{thm}\label{5theoremrateequivocregion}
For any $\nu_0$, by choosing $\nu$ such that $\nu_0>\max\{\nu, 2\nu-\nu^2\}$, we can find
a code -- comprising the original $(n,k_1,d,\overline{P_{e,1}})$ secrecy code, the
forward $(M,n,P_{e,c})$ channel code and the feedback -- which encodes the $k_1$-sequence $\mathbf{s}^{k_1}$ into the $M$-sequence $\mathbf{w_A^M}$,
such that if Bob receives $\mathbf{w_B^M}$ and Eve receives $\mathbf{w_E^M}$, we have
$\frac{k_1H_s}{M+n}\geq \frac{n}{M+n}R-\nu_0$,
$\Delta_1\geq d-\nu_0$,
and
$\overline{P_{e,1}}\leq \nu_0$,
as long as
\be
0\leq \frac{n}{M+n}R\leq \frac{R_{AB,fb}}{R_{AB,fb}+1}R_{t,u},
\ee
\be
0\leq d\leq H_s,
\ee
\be
\frac{n}{M+n}Rd\leq H_s \frac{R_{AB,fb}}{R_{AB,fb}+1}R_{s,u}.
\ee

This yields an overall secrecy rate of
\be
R_{s,0}=R_{s,u} \frac{R_{AB,fb}}{R_{AB,fb}+1}.
\ee
\end{thm}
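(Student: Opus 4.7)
The plan is to assemble the overall code by concatenating three ingredients that have already been constructed in the excerpt: (i) the unscaled $(n,k_1,\Delta_1,\overline{P_{e,1}})$ secrecy code obtained by applying Theorem \ref{5theorem2wyner} to the equivalent pair of channels from Section \ref{5sect2}; (ii) the forward $(M,n,P_{e,c})$ channel code from Proposition \ref{prop1IIB}, which transports $\mathbf{v}\oplus\mathbf{y}$ essentially error free whenever $n/M=R_{AB,fb}<C_{AB}$; and (iii) the $n$ uncoded uses of the feedback channel needed to convey the random sequence $\mathbf{x}$. The theorem is then essentially a bookkeeping statement about the rate-equivocation triple that this concatenation realises over a total of $M+n$ forward/feedback channel uses.

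First I would instantiate the unscaled code: given $R$ and $d$ satisfying $0\leq R\leq R_{t,u}$, $0\leq d\leq H_s$, $Rd\leq H_s R_{s,u}$, Theorem \ref{5theorem2wyner} gives, for each sufficiently small $\nu>0$, an $(n,k_1,\Delta_1,\overline{P_{e,1}})$ code with $k_1 H_s/n\geq R-\nu$, $\Delta_1\geq d-\nu$, and $\overline{P_{e,1}}\leq\nu$. Next I would feed $\mathbf{v}\oplus\mathbf{y}$ into a forward channel code of rate $R_{AB,fb}$, using Proposition \ref{prop1IIB} to guarantee that Bob's joint (channel + secrecy) decoder satisfies an average error probability bounded by $2\nu-\nu^2$; and I would invoke Proposition \ref{prop2IIB} to guarantee that Eve's equivocation about $\mathbf{s}^{k_1}$ does not shrink when we replace the idealised error-free public channel by the real forward channel, so that $\Delta_1\geq d-\nu$ is preserved. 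Choosing $\nu_0>\max\{\nu,\,2\nu-\nu^2\}$ then delivers the three conditions in the theorem statement verbatim.

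The scaling step is then purely arithmetic. The forward channel code uses $M\approx n/R_{AB,fb}$ channel uses to convey the $n$-bit sequence $\mathbf{v}\oplus\mathbf{y}$, and an additional $n$ channel uses are consumed on the feedback link to transmit $\mathbf{x}$ (which cannot be recycled, as emphasised at the end of Section \ref{5sect2}). The aggregate number of channel uses attributable to the transmission of $\mathbf{s}^{k_1}$ is therefore $M+n$, so the per-use source-symbol rate is $k_1H_s/(M+n)=(n/(M+n))\cdot(k_1H_s/n)\geq (n/(M+n))R-\nu$, and with $n/(M+n)=R_{AB,fb}/(R_{AB,fb}+1)$ we obtain the three bounds of the theorem directly by multiplying each unscaled constraint on $R$ and $Rd$ by this factor while leaving the equivocation $d$ unscaled. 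Substituting the extremal values $R=R_{t,u}$ and $d=H_s$ into the last inequality yields the overall secrecy rate $R_{s,0}=R_{s,u}\,R_{AB,fb}/(R_{AB,fb}+1)$.

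The only real obstacle is the first step, namely confirming that the separate-decoding construction in Proposition \ref{prop1IIB} is compatible with Wyner's existence proof so that the two codes can indeed be composed without tightening any of the rate/equivocation bounds; once that composition is in hand and Proposition \ref{prop2IIB} rules out any leakage introduced by the forward channel code, the remainder is the counting argument above. I expect no additional technical difficulty beyond ensuring that $\nu_0$ is chosen large enough to absorb both $\nu$ from the secrecy code and $2\nu-\nu^2$ from the joint-decoding bound in Proposition \ref{prop1IIB}.
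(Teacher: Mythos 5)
Your proposal is correct and follows essentially the same route as the paper: the paper's own proof is simply the one-line citation of Propositions \ref{prop1IIB} and \ref{prop2IIB}, with the counting argument ($M\approx n/R_{AB,fb}$ forward uses plus $n$ feedback uses, hence the scaling factor $\frac{n}{M+n}=\frac{R_{AB,fb}}{R_{AB,fb}+1}$) given in the paragraphs immediately preceding the theorem statement. Your write-up just makes that composition and bookkeeping explicit.
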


\begin{proof}
The proof follows from Propositions \ref{prop1IIB} and \ref{prop2IIB}.
\end{proof}


\section{The First Approach: Eavesdropper's Forward Channel Less Noisy than Legitimate Receiver's Channel} \label{5sect4}

In this section we show a first approach to increasing the secrecy rate by using our feedback-based scheme.
We prove that it can achieve a strictly positive secrecy rate and a non-empty rate-equivocation region even if the
eavesdropper's forward channel $A\to E$ is less noisy than the legitimate receiver's channel $A\to B$.
The case when $A\to B$ is less noisy than $A\to E$ is studied in Section \ref{5sect5}.

If Eve's forward channel is less noisy than Bob's forward channel, or equivalently $\delta_f\leq \epsilon_f$,
then no messages can be transmitted at any level of secrecy over the $A\to B$ channel by Wyner's method \cite{wyner1}.
If we take the forwarding rate $R_{AB,fb}$ arbitrarily close to Bob's forward channel capacity $C_{AB}$, we obtain the
following result which is a straightforward adaptation of Theorem \ref{5theoremrateequivocregion}.

\begin{corol}\label{5corolary1}
For any $\nu_0>0$ there exists a code which encodes the $k$-sequence $\mathbf{s}^{k_1}$ into the $M$-sequence $\mathbf{w_A^M}$,
such that if Bob receives $\mathbf{w_B^M}$ and Eve receives $\mathbf{w_E^M}$, we have
$\frac{k_1H_s}{M+n}\geq \frac{n}{M+n}R-\nu_0$,
$\Delta_1\geq d-\nu_0$,
and
$\overline{P_e}\leq \nu_0$,
as long as
\be
0\leq \frac{n}{M+n}R\leq \frac{C_{AB}}{C_{AB}+1}R_{t,u},
\ee
\be
0\leq d\leq H_s,
\ee
\be
\frac{n}{M+n}Rd\leq H_s \frac{C_{AB}}{C_{AB}+1}R_{s,u}.
\ee

This yields an overall secrecy rate of
\be
R_{s,0}=R_{s,u} \frac{C_{AB}}{C_{AB}+1}.
\ee
\end{corol}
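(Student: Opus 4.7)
The plan is to apply Theorem \ref{5theoremrateequivocregion} with the forwarding rate $R_{AB,fb}$ pushed arbitrarily close to Bob's forward channel capacity $C_{AB}=1-h(\epsilon_f)$. First I would observe that each of the three bounds in Theorem \ref{5theoremrateequivocregion} involves the same scaling factor $g(R_{AB,fb})=\frac{R_{AB,fb}}{R_{AB,fb}+1}$, which is continuous and strictly increasing on $[0,C_{AB}]$. Consequently, the largest achievable rate-equivocation region is obtained by selecting $R_{AB,fb}$ as large as the channel coding theorem permits, i.e.\ anywhere strictly below but arbitrarily close to $C_{AB}$.

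The core of the argument is then straightforward. For any target gap $\nu_0>0$, I would pick a small $\eta>0$ such that $\frac{C_{AB}-\eta}{C_{AB}-\eta+1}$ differs from $\frac{C_{AB}}{C_{AB}+1}$ by less than, say, $\nu_0/(2R_{t,u})$, and set $R_{AB,fb}=C_{AB}-\eta$. By Proposition \ref{prop1IIB}, the composite code inherited from Theorem \ref{5theoremrateequivocregion} (Wyner secrecy encoding followed by forward channel coding, together with the feedback transmission) delivers $\mathbf{s}^{k_1}$ to Bob with average error probability at most $\nu_0$, and by Proposition \ref{prop2IIB} Eve's equivocation remains at least $d-\nu_0$, whenever the rate constraints of Theorem \ref{5theoremrateequivocregion} are met for this choice of $R_{AB,fb}$.

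Finally, letting $\eta\to 0$ and invoking the continuity of $g$ at $C_{AB}$, the three constraints of Theorem \ref{5theoremrateequivocregion} collapse to the ones stated in the corollary; substituting the extremal $d=H_s$ into the third one then yields the overall secrecy rate $R_{s,0}=R_{s,u}\tfrac{C_{AB}}{C_{AB}+1}$. I do not anticipate a genuine obstacle, since the corollary is simply the specialization of the theorem to the asymptotically optimal forwarding rate. The only subtlety worth flagging is that the feedback-based scheme of Section \ref{5sect2} remains valid regardless of whether $\delta_f\leq\epsilon_f$ or $\delta_f>\epsilon_f$, because Proposition \ref{prop2IIB} bounds Eve's equivocation independently of her forward channel quality --- this is precisely what makes the adverse regime $\delta_f\leq\epsilon_f$ tractable and what gives the corollary its practical significance.
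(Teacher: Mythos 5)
Your proposal is correct and follows essentially the same route as the paper, which simply obtains the corollary as a ``straightforward adaptation'' of Theorem \ref{5theoremrateequivocregion} by letting the forwarding rate $R_{AB,fb}$ approach $C_{AB}$, with Propositions \ref{prop1IIB} and \ref{prop2IIB} guaranteeing reliability and equivocation. Your added remarks on the monotonicity and continuity of $R_{AB,fb}/(R_{AB,fb}+1)$ and on why the regime $\delta_f\leq\epsilon_f$ poses no problem are consistent with the paper's reasoning and merely make explicit what the authors leave implicit.
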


The following remark is in order. Maurer's ``secrecy capacity with public discussion'' \cite{maur1} is upper-bounded as follows:
\be
\widehat{C}_s(P_{YZ|X})\leq \max_{P_X}I(X;Y|Z)
\ee
where $X$, $Y$ and $Z$ denote the input and the outputs of the non-perfect channel (in our case the input to feedback channel
at Bob and the outputs at Alice and Eve, respectively), and $P_X$ denotes the probability distribution of $X$ input. It is also
noted in \cite{maur1} that in the case of binary symmetric channels, the upper-bound is achieved. For our case, this means that
the \emph{unscaled} secrecy rate $R_{s,u}=h(\epsilon_b+\delta_b-2\epsilon_b\delta_b)-h(\epsilon_b)$ can be increased no further.

However, for a practical system with imperfect forward channels, the objective should be to maximize the \emph{overall}
secrecy rate rather than the \emph{unscaled} secrecy rate. In the remainder of this section we provide a simple example to
prove that by altering the feedback sequence we can increase the overall secrecy rate of the system over the value
\be
R_{s,0}=\left[h(\epsilon_b+\delta_b-2\epsilon_b\delta_b)-h(\epsilon_b)\right] \frac{C_{AB}}{C_{AB}+1}
\ee
provided by the maximization of the unscaled secrecy rate.

\vspace*{4pt}
{\bf Processing the feedback sequence improves performance}\vspace*{4pt}

So far we assumed that the feedback i.i.d. uniform
sequence of bits $\mathbf{x}$ is transmitted by Bob with
no further processing.

\begin{figure}[]
\centering
\includegraphics[scale=0.8]{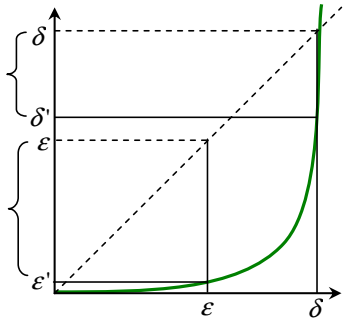}
\caption{The operator corresponding to the repetition coding preprocessing.}\label{5figoperators}
\end{figure}

Further processing of the feedback sequence results in equivalent feedback
channels with altered error probabilities.
Although the overall achievable secrecy rate depends on the rate
at which the feedback is transmitted, an error and rate
reducing encoding/decoding scheme for the feedback sequence implemented
among the three parties can improve the system's performance.  One such
simple scheme, which preserves the independence between the symbols of $\mathbf{y}$
after decoding, is obtained if Bob encodes the feedback sequence $\mathbf{x}$ using
repetition coding of rate $1/N$, and Alice and Eve employ  the optimal decoding scheme,
which is majority decoding.
The scheme results in equivalent BSCs with crossover probabilities

\begin{equation}\label{5preproc21}
\epsilon_b'= \sum_{i=K+1}^{2K+1} \left(\begin{array}{c}2K+1\\i\end{array}\right)\epsilon_b^i(1-\epsilon_b)^{2K+1-i}
\end{equation}
and
\begin{equation}\label{5preproc22}
\delta_b'= \sum_{i=k+1}^{2K+1} \left(\begin{array}{c}2K+1\\i\end{array}\right)\delta_b^i(1-\delta_b)^{2K+1-i},
\end{equation}
where $N=2K+1$ if $N$ is odd and $N=2K+2$ if $N$ is even, and $K\geq 0$.

The optimum $N$ that maximizes the overall secrecy rate can be obtained
numerically. The improvement in the overall secrecy rate due to
repetition coding, as well as the optimal choice of $N$ will be shown
in Figure \ref{5fig01} and \ref{5fig02} of Section \ref{5sect5}.
However at this point we note that a processing
method that decreases equivalent crossover probabilities is better when
$\epsilon_b$ is decreased more than $\delta_b$, i.e. when the strength
of Bob's channel is increased relative to that of Eve's. By inspecting
(\ref{5preproc21}) and (\ref{5preproc22}), we notice that the operator
corresponding to our preprocessing method is exponential. It is
therefore expected that the method gives better results when
$\epsilon_b<\delta_b$, as can be seen from Figure \ref{5figoperators}
(this phenomenon is indeed observed in our numerical results of Section \ref{5sect5}) .

Although the above result may seem counter-intuitive (in light of Maurer's
Theorem 4 \cite{maur1}), the improvement in our case results exactly from the imperfection of the
forward channels, which translates to scaling coefficients for all achievable rates, as
shown in Section \ref{5sect3}.

Note that if a  rate $1/N$ repetition coding is used for the transmission of the feedback sequence,
the total number of channel uses needed for feedback is $Nn$, leading to the overall secrecy rate
\be\label{5Ncorrection1}
R_{s,c}=\frac{nR_{s,u}}{n/R_{AB,fb}+nN}=R_{s,u} \frac{R_{AB,fb}}{NR_{AB,fb}+1}.
\ee 
The unscaled secrecy rate $R_{s,u}$ increases with $N$, while the correction factor $\frac{C_{AB}}{NC_{AB}+1}$
decreases with $N$, hence the need to find the optimal value of $N$ that maximizes $R_{s,c}$.

\textbf{Some numerical results}

Since the secrecy rate is simpler to represent than the rate-equivocation region,
throughout this paper we focus on illustrating the improvements in the achievable secrecy rate due to feedback.
We first consider a model in which the forward channels have crossover probabilities
$\epsilon_f=0.02$ and $\delta_f=0.01$, respectively.
In this scenario, Wyner's scheme cannot
deliver a secret message from Alice to Bob at any positive rate. However, the secrecy rates achievable by our feedback based scheme 
(in Figure \ref{5fig7}) are strictly positive (except in the pathological
cases when $\delta_b=0$ or $\epsilon_b=0.5$).

\begin{figure}[]
\centering
\includegraphics[scale=0.45]{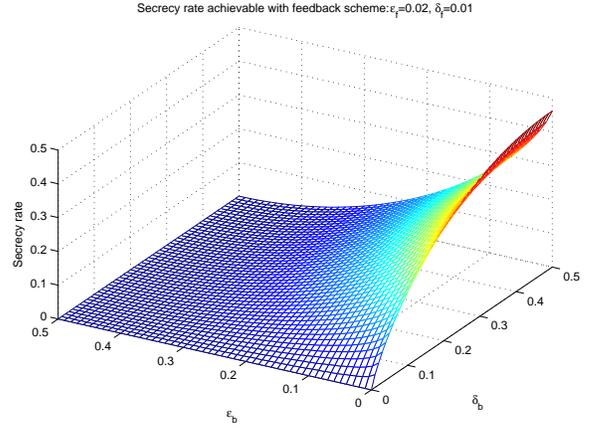}
\caption{Overall secrecy rate achievable by our feedback scheme for $\epsilon_f=0.02$ and $\delta_f=0.01$.}\label{5fig7}
\end{figure}

In Figures \ref{5fig01} and \ref{5fig02} we show the additional improvement in the overall achievable secrecy rate
obtained if we use repetition coding for the transmission of the feedback sequence, and the optimal repetition order $N$.
Although the improvement is marginal, it proves that Maurer's upper bound on the secrecy capacity with public
discussion \cite{maur1} does not hold if the forward channels are imperfect.

\begin{figure}[]
\centering
\includegraphics[scale=0.5]{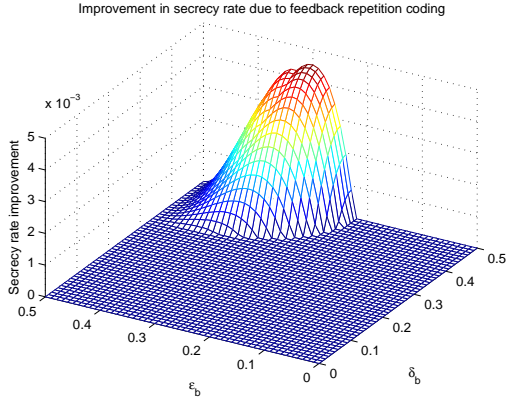}
\caption{Secrecy rate improvement due to feedback repetition coding.}\label{5fig01}
\end{figure}

\begin{figure}[]
\centering
\includegraphics[scale=0.5]{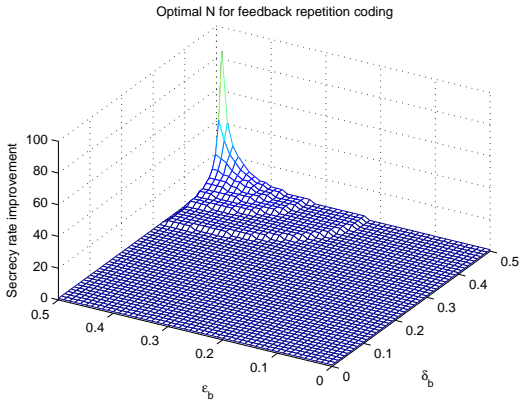}
\caption{The optimal value of $N$ for feedback repetition coding.}\label{5fig02}
\end{figure}


\section{The Second Approach: Legitimate Receiver's Forward Channel Less Noisy than Eavesdropper's Channel} \label{5sect5}

If $\epsilon_f< \delta_f$, a non-empty rate-equivocation region and a strictly positive secrecy rate less than
$C_s=C_{AB}-C_{AE}$ are asymptotically achievable without feedback \cite{wyner1}.
In this section we show that even under this scenario, sometimes feedback can improve the achievable secrecy rate.
For example, when $C_s$ is small compared to $C_{AB}$,and the unscaled secrecy rate achievable with feedback $R_{s,u}$
is relatively large (i.e. when the channel $B\to A$ is significantly better than the channel $B\to E$, while
the channel $A\to B$ is only slightly better than the channel $A\to E$) , we can have $C_s<R_{s,u} \frac{C_{AB}}{C_{AB}+1}$.

However, in general, neither Wyner's original scheme, nor our feedback based scheme is optimal. Instead, as we shall see shortly,
encoding a combination of a secret message and a feedback-processed message into the forwarded sequence $\mathbf{w_A}$
can achieve a higher overall secrecy rate.

The method behind the direct part of Wyner's Theorem 2 \cite{wyner1} assumes the transmission of $m_2$ bits, containing
$k_2=n_2C_s$ secret bits, by mapping the $k_2$-bit secret message $\mathbf{s}^{k_2}$ to a specific coset. The rest of
$m_2-k_2$ bits correspond to the index of the randomly picked coset representative which is transmitted.
Since Bob can decode the transmitted codeword perfectly, he has access to all $m_2$ bits.
The $m_2-k_2$ non-secret bits are neither secret to, nor can they be decoded by Eve without errors \cite{csisz1}.
It was assumed in \cite{wyner1} that these bits are picked randomly (according to
a uniform distribution) and carry no information. In their extension of Wyner's work, Csiszar and Korner
\cite{csisz1} observe that these bits can actually be picked according to the output message of a uniform
source of entropy $H_x=m_2-k_2$, which can carry useful information for Bob \cite{csisz1}.

At a first glance, it would appear that by encoding the message $\mathbf{v}\oplus\mathbf{y}$ into the $m_2-k_2$
non-secret bits, we could transmit it asymptotically error free to Bob, at a rate arbitrarily close to
$C_{AB}-C_s=C_{AE}$, in addition to the original secret message $\mathbf{s}^{k_2}$. In this case, even if Eve had perfect access
to these bits (which she has not), the equivocation of both secret messages would be preserved.
This argument is the starting point of the proposed mixed secrecy scheme of \cite{gunduz} (see Appendix \ref{5appgunduzwrong}
for more remarks on \cite{gunduz}). Unfortunately, the
argument above is false. By using the sequence $\mathbf{v}\oplus\mathbf{y}=\mathbf{v}\oplus\mathbf{x}\oplus\mathbf{e_{bA}}$
to pick the coset representative to be transmitted over the forward channel, the equivocation of the secret message $\mathbf{s}^{k_2}$
encoded into the other $k_2$ bits is compromised. 
As shown in Appendix \ref{5appgunduzwrong}, this happens because Eve has access to a distorted version of the feedback sequence
$\mathbf{x}\oplus \mathbf{e_{bE}}$, which is correlated with $\mathbf{v}\oplus\mathbf{y}$. 

Therefore we need an encoding technique in which Eve's information about the message $\mathbf{v}\oplus\mathbf{y}$,
obtained through $\mathbf{x}\oplus \mathbf{e_{bE}}$, does not influence the secrecy of $\mathbf{s}^{k_2}$. Such
a technique is readily provided by \cite{csisz1}. The encoding technique of \cite{csisz1} aims at transmitting not
only a secret message from Alice to Bob, but also a common message from Alice to both Bob and Eve. The code is
designed following a 2-cycle maximal construction idea. First, a sub-code which can carry information reliably over
both channels, at a sub-optimal rate is picked for the common message. Other codewords are then added to the sub-code
(in two cycles) -- such that Bob can distinguish between any two codewords, while Eve can only distinguish between any
two codewords corresponding to the same secret message -- until no more such codewords exist.

Adapting this strategy to our case, we can treat the sequence $\mathbf{v}\oplus\mathbf{y}$ as a common
message, intended for both Bob and Eve. In addition to the common message, we can also transmit a secret message $\mathbf{s}^{k_2}$ 
to Bob. Since the common message is designed to be perfectly decoded by Eve, the additional information
contained in $\mathbf{x}\oplus \mathbf{e_{bE}}$ cannot compromise the secrecy of $\mathbf{s}^{k_2}$. The drawback is that
the transmission of a common message decreases the rate at which the secret message $\mathbf{s}^{k_2}$ can be conveyed
to Bob \cite{csisz1}. However, the transmission of an additional secret message $\mathbf{s}^{k_1}$, encoded in the sequence
$\mathbf{v}$, can make up for this loss and, in many circumstances, bring noticeable improvements over Wyner's scheme \cite{wyner1}.

In order to pursue this path, we first need to establish what is the optimal tradeoff between the common message rate
and the secret message rate. Denote by $W_A$, $W_B$ and $W_E$ the input to
the forward channel and the outputs at Bob and Eve, respectively.
According to Theorem 1 of \cite{csisz1}, the two rates have to satisfy:
\be\label{5relRe0}
R_e\leq I(V;W_B|U)-I(V;W_E|U),
\ee
\be\label{5relRc0}
R_c\leq \min[I(U;W_B),I(U;W_E)],
\ee
where $R_e$ is the secret message rate, $R_c$ is the common message rate, and $U$ and $V$ are two auxiliary random variables
such that $U\to V\to W_A\to W_B,W_E$ form a Markov chain.

For our special BSC case, and under the scenario where $\epsilon_f< \delta_f$, we can further simplify (\ref{5relRc0}):
\be\label{5relRc}
R_c\leq I(U;W_E).
\ee
Following the proof of Corollary 3 in \cite{csisz1}, we can write (\ref{5relRe0}) as:
\be\label{5relRe}
R_e\leq I(V;W_B|U)-I(V;W_E|U)=\nonumber\\
=I(V;W_B)-I(V;W_E)-[I(U;W_B)-I(U;W_E)]=\nonumber\\
=[I(W_A;W_B)-I(W_A;W_E)]-\nonumber\\
-[I(W_A;W_B|V)-I(W_A;W_E|V)]-\nonumber\\
-[I(U;W_B)-I(U;W_E)],
\ee
where the equalities follow from the fact that if $X\to Y\to Z$ form a Markov chain, then $I(Y;Z)=I(X;Z)+I(Y;Z|X)$
(Lemma 1 in \cite{csisz1}). Note that the term $[I(W_A;W_B|V)-I(W_A;W_E|V)]$ is always positive \cite{csisz1},
and is minimized for $V=W_A$. The condition in (\ref{5relRe}) is thus reduced to
\be
R_e\leq [I(W_A;W_B)-I(W_A;W_E)]-\nonumber\\-[I(U;W_B)-I(U;W_E)],
\ee
or equivalently
\be
R_e\leq I(W_A;W_B|U)-I(W_A;W_E|U).
\ee

At this point we are looking for the auxiliary random variable $U$, and its relationship with the channel input
random variable $W_A$, that achieve the points on the boundary of the $(R_e,R_c)$ region described above.
The only information about $U$ that
is provided in \cite{csisz1}, is that its alphabet size may, without loss of generality, be assumed to be at most
three letters larger than the alphabet of
$W_A$ (in our binary case, the alphabet of $U$ would have at most five letters).

The following three results(two propositions and one conjecture) greatly simplify the search for the optimal auxiliary
random variable and channel. The two propositions are proved, and the arguments behind the conjecture are
presented, in Appendix \ref{5apptheorembinaryu}.

\begin{prop}\label{prop3letters}
The optimal auxiliary random variable $U$ can be defined, without loss of optimality,
over a three-dimensional alphabet.
\end{prop}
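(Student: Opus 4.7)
The plan is to exploit the special structure of the binary symmetric forward channels to tighten the generic Csisz\'ar-K\"orner alphabet bound $|U|\le|W_A|+3=5$ down to $|U|\le 3$ by a Carath\'eodory-type argument. Since the earlier derivation has already fixed $V=W_A$, the Markov chain $U\to W_A\to(W_B,W_E)$ is in force, and the two rate constraints
\begin{equation*}
R_e\le I(W_A;W_B|U)-I(W_A;W_E|U),\qquad R_c\le I(U;W_E)
\end{equation*}
depend on the joint law of $(U,W_A)$ only through three scalars: the marginal $p(W_A=1)$ and the two conditional entropies $H(W_B|U)$ and $H(W_E|U)$. Indeed, $H(W_B|W_A)=h(\epsilon_f)$ and $H(W_E|W_A)=h(\delta_f)$ are constants fixed by the forward channels, $H(W_E)$ is a function of $p(W_A=1)$ alone, and the Markov chain lets us collapse $H(W_B|W_A,U)=H(W_B|W_A)$, and similarly for $W_E$.

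For each realization $u$ I would parametrize the conditional law $p(W_A|U=u)$ by the single scalar $\alpha_u:=\Pr(W_A=1|U=u)\in[0,1]$, so that $H(W_B|U=u)=h(\epsilon_f+\alpha_u(1-2\epsilon_f))$ and $H(W_E|U=u)=h(\delta_f+\alpha_u(1-2\delta_f))$. Then I would consider the curve
\begin{equation*}
\mathcal{C}=\{(\alpha,\,h(\epsilon_f+\alpha(1-2\epsilon_f)),\,h(\delta_f+\alpha(1-2\delta_f))):\alpha\in[0,1]\}\subset\mathbb{R}^3,
\end{equation*}
which, being the continuous image of $[0,1]$, is a connected subset of $\mathbb{R}^3$. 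The distribution of $U$ induces a probability measure on $\mathcal{C}$, and the three scalars to be preserved are precisely the coordinates of the mean of this measure.

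The final step is to invoke the Fenchel-Eggleston-Bunt strengthening of Carath\'eodory's theorem, which guarantees that every point in the convex hull of a connected subset of $\mathbb{R}^d$ can be written as a convex combination of at most $d$ of its elements. Applied to $\mathcal{C}\subset\mathbb{R}^3$, this produces a replacement distribution for $U$ supported on at most three atoms that matches $p(W_A=1)$, $H(W_B|U)$ and $H(W_E|U)$, and therefore, by the reduction of the first paragraph, achieves the very same $(R_e,R_c)$ pair as the original $U$. This delivers the claimed bound $|U|\le 3$.

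The step I expect to require the most care is the opening reduction itself, namely verifying that the right-hand sides of both rate bounds truly factor through only the three scalars $(p(W_A=1),H(W_B|U),H(W_E|U))$, with no residual dependence on the joint law of $(U,W_A)$. This rests on the BSC structure of the forward channels (making $H(W_B|W_A)$ and $H(W_E|W_A)$ input-independent constants) together with the Markov property $U\to W_A\to(W_B,W_E)$; once these two facts are applied carefully, the Fenchel-Eggleston step is essentially a quotation of a standard result.
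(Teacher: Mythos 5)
Your proposal is correct and follows essentially the same route as the paper: both reduce the rate expressions to three functionals of the conditional law of $W_A$ given $U$ (the input marginal and the two conditional output entropies), view them as coordinates of a point in the convex hull of the connected space curve $\{(p,\,h(\epsilon_f+p-2\epsilon_f p),\,h(\delta_f+p-2\delta_f p))\}\subset\mathbb{R}^3$, and invoke the Fenchel--Eggleston strengthening of Carath\'eodory's theorem to get a representation with at most three atoms. The only cosmetic difference is that the paper restricts the parameter to $[0,0.5]$ using the symmetry $h(x)=h(1-x)$, which does not affect the argument.
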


\begin{conj}\label{conj2letters}
The optimal auxiliary random variable $U$ can be defined, without loss of optimality,
over a binary alphabet.
\end{conj}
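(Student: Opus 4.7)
The plan is to establish the conjecture through a Carathéodory argument on the concave envelope of a one-dimensional function. Since the $(R_c, R_e)$ region is closed and convex, each of its boundary points is the maximizer of some linear combination $\lambda R_c + R_e$ with $\lambda \geq 0$. Writing $p_u = \Pr[W_A = 1 \mid U = u]$, $\bar p = \sum_u P(u)\,p_u$, and substituting the BSC entropy expressions, this objective becomes separable:
\begin{equation}
\lambda R_c + R_e \;=\; \lambda\,h(\bar p * \delta_f) \;+\; \sum_u P(u)\,\phi_\lambda(p_u) \;+\; C,
\end{equation}
where $\phi_\lambda(p) = h(p * \epsilon_f) - (1+\lambda)\,h(p * \delta_f)$, $*$ denotes binary convolution, and $C = h(\delta_f) - h(\epsilon_f)$ is independent of $U$.

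I would then split the maximization into an inner and an outer problem. For each fixed marginal $\bar p \in [0,1]$, the inner maximization of $\sum_u P(u)\,\phi_\lambda(p_u)$ subject only to $\sum_u P(u)\,p_u = \bar p$ and $\sum_u P(u) = 1$ is by definition the value of the upper concave envelope $\hat\phi_\lambda(\bar p)$. The outer problem is then the scalar optimization of $\lambda\,h(\bar p * \delta_f) + \hat\phi_\lambda(\bar p)$ over $\bar p \in [0,1]$, and it introduces no additional atoms into the support of $U$.

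The crux is that the concave-envelope value is always attainable by a distribution supported on at most two points. The point $(\bar p, \hat\phi_\lambda(\bar p))$ lies on the upper boundary of the convex hull (in $\mathbb{R}^2$) of the graph $\{(p, \phi_\lambda(p)) : p \in [0,1]\}$. By the sharper form of Carathéodory's theorem for boundary points of a convex set in $\mathbb{R}^d$, only $d = 2$ extreme points of the generating set are required. Concretely, either $\phi_\lambda$ is locally concave at $\bar p$ and a single atom at $\bar p$ attains the envelope, or a supporting chord to $\hat\phi_\lambda$ at $(\bar p, \hat\phi_\lambda(\bar p))$ meets the graph at two points $p_1 < \bar p < p_2$, in which case the two-atom distribution on $\{p_1, p_2\}$ with weights $(p_2 - \bar p)/(p_2 - p_1)$ and $(\bar p - p_1)/(p_2 - p_1)$ exactly realizes $\hat\phi_\lambda(\bar p)$.

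The main obstacle will be handling degenerate configurations in which the supporting chord meets the graph at three or more points of equal height. Any two contact points still yield a valid two-atom distribution, so the conclusion itself never fails; nevertheless, a careful limiting argument is required to ensure that no measure-zero exceptional $\bar p^\ast$ forces one to invoke three atoms. A short continuity argument---perturbing $\lambda$ slightly and using upper semicontinuity of the concave envelope---should close this remaining gap. It is presumably this last technical verification, together with the need to rule out similar pathologies at the endpoints of $[0,1]$, that has kept the statement at the level of a conjecture.
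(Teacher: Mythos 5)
You should first note that the paper itself does not prove this statement: it is labeled a conjecture precisely because the authors' geometric argument reduces it to Remark \ref{5remappspc1}, a claim about the relative ordering of chord intersections for the two planar projections of the space curve $\mathscr{C}$, which they could only verify numerically. Your route is genuinely different and, I think, better localized. Instead of working with the convex hull of the three-dimensional curve $(p,\,h(p*\epsilon_f),\,h(p*\delta_f))$ and trying to show two curve points suffice for every hull point, you scalarize the boundary via $\lambda R_c + R_e$, observe that the objective splits into a term $\lambda\,h(\bar p * \delta_f)$ that depends on $U$ only through the marginal $\bar p$ plus an additive term $\sum_u P(u)\phi_\lambda(p_u)$, and reduce the cardinality question to the standard fact that the upper concave envelope of a scalar function is attained by at most two atoms. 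The decomposition is correct ($C=h(\delta_f)-h(\epsilon_f)$ is indeed the $U$-independent constant, and under $\epsilon_f<\delta_f$ the $\min$ in the common-rate bound is $I(U;W_E)$, so only $h(\cdot * \delta_f)$ enters there). This collapses the unproven content of the paper's argument---a global property of a 3D curve---into a question about a single one-variable function $\phi_\lambda$, which is a real improvement.

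The gap is the one you name, but your proposed repair does not close it. Scalarization shows that for every $\lambda\ge 0$ \emph{some} maximizer of $\lambda R_c+R_e$ is binary-achievable; together with the trivial directions this establishes that the full region is the closed convex hull of the binary-$U$ region, and that every \emph{exposed} point is binary-achievable. It does not show that every point of the Pareto frontier is: if $R_e^{\max}(R_c)$ contains an affine segment, an interior point of that segment maximizes $\lambda R_c+R_e$ only for the single $\lambda$ equal to minus its slope, and for that $\lambda$ your two-atom construction may land on an endpoint of the segment rather than at the prescribed $R_c^*$. Perturbing $\lambda$ does not help---nearby $\lambda$ expose points off the segment, and their limits are again the segment's endpoints, so upper semicontinuity only recovers what you already have. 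To finish you would need either (i) a connectedness/intermediate-value argument showing that, for the critical $\lambda$, the set of $(R_e,R_c)$ values swept out by the optimal chords $(p_1,p_2)$ as $\bar p$ varies covers the whole face, or (ii) a proof that the frontier has no affine segments (strict concavity of $R_e^{\max}$), which amounts to a curvature analysis of $\phi_\lambda$ and $h(\cdot*\delta_f)$. Either would upgrade your sketch to a complete proof of the conjecture; as written, it proves the weaker but already useful statement that restricting to binary $U$ loses nothing in any weighted-sum objective.
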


\begin{prop}\label{propbsc}
If $U$ is considered binary, then its optimal distribution over its two-dimensional alphabet (pick it as $\{0,1\}$
for convenience) is a uniform one. Moreover, the optimal auxiliary channel that links $U$ to the physical channel
input $W_A$ is a simple binary symmetric channel.
\end{prop}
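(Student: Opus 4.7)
My plan is to recast the problem as a one-parameter Lagrangian maximization and to exploit the built-in symmetry of the underlying BSCs. For $\lambda\ge 0$, consider the scalar objective $L=R_c+\lambda R_e$; as $\lambda$ sweeps $[0,\infty)$, the maximizers trace out the upper boundary of the $(R_c,R_e)$ region derived above. Substituting $R_c=I(U;W_E)$ and $R_e=I(W_A;W_B|U)-I(W_A;W_E|U)$ and regrouping terms in the binary parameters $q=P(U=0)$, $a=P(W_A=1|U=0)$, $b=P(W_A=1|U=1)$ yields
\[
L(q,a,b)=h(r*\delta_f)+q\,\phi(a)+(1-q)\,\phi(b)+\mathrm{const},
\]
where $r=qa+(1-q)b$, $x*y=x(1-y)+(1-x)y$ denotes binary convolution, and $\phi(x)=\lambda h(x*\epsilon_f)-(1+\lambda)\,h(x*\delta_f)$.

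The next step collapses this three-variable problem to a scalar one. Treat the conditional crossover $p=P(W_A=1|U)$ as a binary-valued random variable taking the values $a,b$ with probabilities $q,1-q$; then $L=h(\expec[p]*\delta_f)+\expec[\phi(p)]+\mathrm{const}$. For any fixed first moment $\expec[p]=r$, Carath\'eodory's theorem guarantees that $\max\expec[\phi(p)]$ over binary laws of mean $r$ equals the upper concave envelope $\hat{\phi}(r)$, and is realised by a two-atom distribution. The optimization therefore reduces to maximizing the scalar $L(r)=h(r*\delta_f)+\hat{\phi}(r)+\mathrm{const}$ over $r\in[0,1]$.

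Symmetry now pins down the maximizer. The identities $h(1-y)=h(y)$ and $(1-x)*\delta_f=1-(x*\delta_f)$ yield $\phi(1-x)=\phi(x)$, so both $\phi$ and $\hat{\phi}$ are symmetric about $1/2$. The term $h(r*\delta_f)$ is concave in $r$ (as the composition of the concave $h$ with an affine map) and symmetric about $1/2$ by the same identities. Hence $L(r)$ is concave and symmetric about $1/2$, so its maximum is attained at $r=1/2$. Next, I claim $\hat{\phi}(1/2)=\max_{x\in[0,1]}\phi(x)$: any two-atom law with support $\{u,v\}$, weights $\lambda,1-\lambda$, and mean $1/2$ satisfies
\[
\lambda\phi(u)+(1-\lambda)\phi(v)\le\max\{\phi(u),\phi(v)\}\le\max\phi,
\]
and equality is attained by the horizontal chord joining $(\alpha,\phi(\alpha))$ and $(1-\alpha,\phi(\alpha))$ for any maximizer $\alpha$ of $\phi$ (the point $1-\alpha$ also being a maximizer by the symmetry of $\phi$). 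This chord corresponds to the two-atom law of equal masses on $\{\alpha,1-\alpha\}$, namely $q=1/2$, $a=\alpha$, $b=1-\alpha$: a uniform $U$ coupled to $W_A$ through a BSC with crossover $\alpha$.

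The main obstacle lies in establishing the identity $\hat{\phi}(1/2)=\max\phi$ cleanly; once it is in place the structural conclusion is immediate, and letting $\lambda$ range over $[0,\infty)$ traces the entire Pareto boundary of the $(R_c,R_e)$ region via such BSC-uniform configurations (with $\alpha$ depending on $\lambda$), establishing both claims of the proposition simultaneously. In the degenerate case where $\phi$ happens to be concave for the given $\lambda$, one has $\alpha=1/2$ and the construction reduces to the trivial BSC with crossover $1/2$ (equivalently $W_A$ independent of $U$), which satisfies the claim vacuously.
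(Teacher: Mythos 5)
Your argument is essentially correct, but it reaches the conclusion by a genuinely different route than the paper. The paper's proof fixes the quantity $H(W_E|U)=qh(\alpha\to\delta_f)+(1-q)h(\beta\to\delta_f)$, chooses $\gamma$ so that the uniform--BSC configuration matches it, and then shows \emph{simultaneous} domination: $R_{c,u}$ can only increase (because $H(W_E)$ is driven to its maximum value $1$) and $R_{e,u}$ can only increase (via a delicate zero-counting argument on $g(x)=h(\gamma\to x)-qh(\alpha\to x)-(1-q)h(\beta\to x)$ and its second derivative, expressed through $\mu(x)=x(1-x)/(1-2x)^2$). You instead scalarize with $L=R_c+\lambda R_e$, collapse the three parameters to the distribution of the conditional crossover $p$, and use concavity-plus-symmetry of $h(r*\delta_f)+\hat{\phi}(r)$ to force $\expec[p]=1/2$ and the equal-mass support $\{\alpha,1-\alpha\}$; the identity $\hat{\phi}(1/2)=\max\phi$ that you flag as the "main obstacle" is in fact already complete as you wrote it, since the two-point law on $\{\alpha,1-\alpha\}$ with equal weights has mean $1/2$ and achieves $\max\phi$ by the symmetry $\phi(1-x)=\phi(x)$. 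Your route is considerably cleaner and avoids the paper's calculus, but it proves something slightly weaker: Lagrangian scalarization over $\lambda\in[0,\infty)$ only characterizes the upper boundary of the \emph{convex hull} of the binary-$U$ rate pairs, whereas the paper's argument shows that \emph{every} triple $(q,\alpha,\beta)$ is coordinate-wise dominated by some uniform--BSC point, a statement that holds even if the binary-$U$ region is not convex. Since the paper later maximizes the non-linear (and non-monotone-in-$R_c$) objective $(R_e^*+R_c^*R_{s,u})/(R_c^*+1)$ over this region, the pointwise domination is the form actually used; your version would need either a convexity/time-sharing remark or the observation that this objective is linear-fractional (so its supremum over a set equals that over its convex hull) to close the gap. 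With that one sentence added, your proof stands.
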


In the remainder of this paper we shall assume that $U$ is a binary, uniform random variable, linked to $W_A$
through a BSC of crossover probability $\gamma$. Note that even if Conjecture \ref{conj2letters} were false,
this assumption would not interfere with the achievability of our secrecy rates. Instead, our rates would
lower-bound the secrecy rates achievable under the optimal distribution of a ternary $U$ and the corresponding
optimal auxiliary channel between $U$ and $W_A$.

Once we pick the auxiliary channel crossover probability $\gamma$
we can compute
\be\label{5rel2rc}
R_c^*=1-h(\gamma+\delta_f-2\gamma\delta_f)
\ee
and
\be\label{5rel2re}
R_e^*=[h(\delta_f)-h(\epsilon_f)]-\nonumber\\-[h(\gamma+\delta_f-2\gamma\delta_f)-h(\gamma+\epsilon_f-2\gamma\epsilon_f)].
\ee

Similar arguments to those in the previous section apply to show that the messages $\mathbf{v}\oplus\mathbf{y}$,
containing the secret message $\mathbf{s}^{k_1}$, can now be transmitted to Bob asymptotically error free at
a rate arbitrarily close to $R_c^*$, in the form of a common message. In addition, another secret message $\mathbf{s}^{k_2}$
can be transmitted simultaneously to Bob at rate close to $R_e^*$. In the remainder of this section we calculate the resulting
overall secrecy rate.

\begin{prop}
If the legitimate receiver's channel is less noisy than the eavesdropper's channel,
the secrecy rate
\be\label{5relrsofinal}
R_{s,0}=\max\left[\max_{\gamma} \frac{(R_e^*+R_c^*R_{s,u})}{R_c^*+1},\frac{C_{AB}R_{s,u}}{C_{AB}+1}\right]
\ee
is achievable by our feedback-based scheme, where $R_e^*$ and $R_c^*$ are given by (\ref{5rel2rc}) and
(\ref{5rel2re}), respectively.
\end{prop}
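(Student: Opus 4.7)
The plan is to achieve (\ref{5relrsofinal}) by comparing two achievable schemes and taking the better one. The second term $C_{AB}R_{s,u}/(C_{AB}+1)$ is already established in Corollary \ref{5corolary1}, since that rate is achievable regardless of which forward channel is less noisy. So the real work is to establish the achievability of $(R_e^*+R_c^*R_{s,u})/(R_c^*+1)$ for each fixed $\gamma$; the outer maximum over $\gamma$ then follows by optimization.

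For a fixed $\gamma$, I would invoke Csisz\'ar and K\"orner's coding scheme (Theorem 1 of \cite{csisz1}) with the auxiliary random variable $U$ being uniform binary and connected to $W_A$ through the BSC with crossover $\gamma$ identified in Proposition \ref{propbsc}. This yields a code on the forward broadcast channel that simultaneously transports a common message at rate arbitrarily close to $R_c^*$ (reliably decodable by both Bob and Eve) and a confidential message at rate arbitrarily close to $R_e^*$ (reliably decodable by Bob, perfectly equivocated for Eve). I would use the common-message slot to transport the $n$-bit sequence $\mathbf{v}\oplus\mathbf{y}$ produced by the kernel of Section \ref{5sect2}, and I would use the confidential-message slot to carry an additional secret source word $\mathbf{s}^{k_2}$.

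Next I would count rates. The kernel encodes $k_1\approx nR_{s,u}/H_s$ secret source symbols into the $n$-bit message $\mathbf{v}$. Transmitting $\mathbf{v}\oplus\mathbf{y}$ as a common message at rate $R_c^*$ uses $M\approx n/R_c^*$ forward channel symbols, and the confidential slot of the same code block carries $k_2H_s\approx MR_e^*=nR_e^*/R_c^*$ additional secret source bits. The feedback of $\mathbf{x}$ itself costs $n$ extra channel uses on the reverse link. Hence the total of $(k_1+k_2)H_s\approx n\bigl(R_{s,u}+R_e^*/R_c^*\bigr)$ secret source bits is delivered in $M+n=n(1+R_c^*)/R_c^*$ channel uses, giving the overall secrecy rate $(R_e^*+R_c^*R_{s,u})/(1+R_c^*)$, as claimed. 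Taking the maximum over $\gamma$ and the maximum with Corollary \ref{5corolary1} yields (\ref{5relrsofinal}).

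The main obstacle is showing that Eve's equivocation about both $\mathbf{s}^{k_1}$ and $\mathbf{s}^{k_2}$ is preserved --- precisely the subtlety that (as the paper argues) invalidates the schemes of \cite{myself6} and \cite{gunduz}. For $\mathbf{s}^{k_1}$ the argument of Proposition \ref{prop2IIB} still applies: even granting Eve perfect knowledge of $\mathbf{v}\oplus\mathbf{y}$, the Crypto lemma gives that $\mathbf{x}\oplus\mathbf{e_{bE}}$ is independent of $(\mathbf{s}^{k_1},\mathbf{v}\oplus\mathbf{e_{bE}}\oplus\mathbf{e_{bA}})$, so Eve ends up with a BSC of crossover $\epsilon_b+\delta_b-2\epsilon_b\delta_b$ on $\mathbf{v}$. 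For $\mathbf{s}^{k_2}$, the danger is that Eve's degraded feedback $\mathbf{x}\oplus\mathbf{e_{bE}}$ is correlated with the common message $\mathbf{v}\oplus\mathbf{y}$ she decodes --- but this is harmless precisely because the common message was already designed to be fully known to Eve; any function of $\mathbf{v}\oplus\mathbf{y}$ and $\mathbf{x}\oplus\mathbf{e_{bE}}$ is conditionally independent of $\mathbf{s}^{k_2}$ given $\mathbf{v}\oplus\mathbf{y}$, so the Csisz\'ar--K\"orner secrecy bound on $\mathbf{s}^{k_2}$ is not degraded by revealing $\mathbf{x}\oplus\mathbf{e_{bE}}$ to Eve. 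Writing this formally via the Markov chain $\mathbf{s}^{k_2}\to W_A\to W_E$ with $(\mathbf{v}\oplus\mathbf{y},\mathbf{x}\oplus\mathbf{e_{bE}})$ as side information, and invoking the independence of $\mathbf{s}^{k_2}$ from the kernel randomness $(\mathbf{x},\mathbf{y})$, is the one nontrivial step where care must be taken to avoid the pitfall spelled out in Appendix \ref{5appgunduzwrong}.
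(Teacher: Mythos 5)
Your proposal is correct and follows essentially the same route as the paper: both rest on the Csisz\'ar--K\"orner two-message code with a uniform binary $U$ linked to $W_A$ by a BSC($\gamma$), with $\mathbf{v}\oplus\mathbf{y}$ carried as the common message and $\mathbf{s}^{k_2}$ in the confidential slot, and both do the same channel-use accounting ($M\approx n/R_c^*$ forward uses plus $n$ feedback uses) to reach $(R_e^*+R_c^*R_{s,u})/(R_c^*+1)$, with the second term in the maximum covering the $\gamma=0$ discontinuity exactly as in the paper's closing remark. The only presentational differences are that the paper works through the rate--equivocation product $Rd\to H_s(R_e^*+R_c^*R_{s,u})/(R_c^*+1)$ via Corollary \ref{5cor1wyner} rather than arguing directly at the perfect-secrecy corner, and that you spell out formally the conditional-independence step (that $\mathbf{x}\oplus\mathbf{e_{bE}}$ is harmless given the decodable common message) which the paper states only informally in the text preceding the proposition.
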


\begin{proof}
Define the equivocations $\Delta_1=\frac{1}{k_1}H(\mathbf{s}^{k_1}|\mathbf{w_E^M}, \mathbf{x}^n+\mathbf{e_{bE}}^n)$ and
$\Delta_2=\frac{1}{k_2}H(\mathbf{s}^{k_2}|\mathbf{w_E^M})$,
where $\mathbf{s}^{k_2}$ is the $k_2$-sequence of secret source symbols that are encoded in the codeword
$\mathbf{w_A^M}$ as a secret message, and $\mathbf{s}^{k_1}$ is a distinct $k_1$-sequence of secret source
symbols that are encoded in the sequence $\mathbf{v}\oplus\mathbf{y}$ by our feedback scheme.
The sequence $\mathbf{v}\oplus\mathbf{y}$ is in turn mapped into the same codeword $\mathbf{w_A^M}$ as a common message.
The transmitted codeword $\mathbf{w_A^M}$ is received by Eve as $\mathbf{w_E^M}$.
We know that for any $\nu>0$ there exists such an encoding technique which satisfies
\be\label{5brrrr1}
\begin{array}{lcr}
\frac{k_2 H_s}{M}\geq R_2-\nu,
&
\Delta_2\geq d_2-\nu,
&
\overline{P_{e,2}}\leq \nu,
\end{array}
\ee
as long as
\be
\begin{array}{lcr}
0\leq R_2\leq C_{AB},
&
0\leq d_2\leq H_s,
&
R_2d_2\leq H_s R_e^*,
\end{array}
\ee
and
\be\label{5brrrr2}
\begin{array}{lcr}
\frac{k_1 H_s}{M+n}\geq R_1-\nu,
&
\Delta_1\geq d_1-\nu,
&
\overline{P_{e,1}}\leq \nu,
\end{array}
\ee
as long as
\be
\begin{array}{lr}
0\leq R_1\leq R_{t,u}\frac{R_c^*}{R_c^*+1},
&
0\leq d_1\leq H_s,\\~&
R_1d_1\leq H_s R_{s,u}\frac{R_c^*}{R_c^*+1}.
\end{array}
\ee

The equivocation of the secret message at Eve is now defined as:
\be
\Delta=\frac{1}{k_1+k_2}H(\mathbf{s}^{k_1},\mathbf{s}^{k_2}|\mathbf{w_E^M}, \mathbf{x}^n+\mathbf{e_{be}}^n).
\ee
Since $\mathbf{s}^{k_1}$ and $\mathbf{s}^{k_2}$ are independent, we can write
\be
\Delta=\frac{k_1}{k_1+k_2}\Delta_1+\frac{k_2}{k_1+k_2}\Delta_2.
\ee
Note that the overall rate at which the secret source is transmitted is now $\frac{(k_1+k_2)H_s}{M+n}$.
Therefore, a correction of $\frac{M}{M+n}$ has to be applied to the rate $R_2$.
As a result, the rate-equivocation pair $(R,d)$ is achievable if $R=\min\{\frac{M}{M+n}R_2+R_1,C_{AB}\}$
and $d=\frac{k_1}{k_1+k_2}d_1+\frac{k_2}{k_1+k_2}d_2$.
Note that this implies $R<C_{AB}$ and $d<H_s$.

Also, due to Corollary \ref{5cor1wyner}, if $M$ is large enough, we have $k_2 d_2 \to MR_e^*$ and $k_1 d_1 \to (M+n)\frac{R_c^*}{R_c^*+1}R_{s,u}$.
Due to (\ref{5brrrr1}) and (\ref{5brrrr2}) we also have
$k_1+k_2 \to \frac{M+n}{H_s}(\frac{M}{M+n}R_2+R_1)$, so we can write
\be
d=\frac{k_1d_1+k_2d_2}{k_1+k_2}
\to H_s\frac{1}{R}\left[\frac{M}{M+n}R_e^*+\frac{R_c^*R_{s,u}}{R_c^*+1}\right].
\ee
Recall that for this case, the $n$-sequence $\mathbf{v}\oplus \mathbf{y}$ is encoded in the $M$-sequence $\mathbf{w_A^M}$
at rate $R_c^*$. Thus, $\frac{n}{M}=R_c^*$, which implies $\frac{M}{M+n}=\frac{1}{R_c^*+1}$, leading to
\be
d \to H_s\frac{1}{R}\frac{R_e^*+R_c^*R_{s,u}}{R_c^*+1}
\ee 
and
\be\label{5kjhfgjvuy}
Rd\to H_s\frac{(R_e^*+R_c^*R_{s,u})}{R_c^*+1}.
\ee
Note that the condition for achieving equality asymptotically (as $M\to \infty$) in (\ref{5kjhfgjvuy}) above is that
the two levels of secrecy operate at $R_2d_2 = H_s C_e^*$ and $R_1d_1= H_s \frac{R_c^*}{R_c^*+1}R_{s,u}$ respectively.

Several comments are in order.
If $\gamma=0$, we obtain $R_c^*=C_{AE}$, and $R_e^*=0$. However in this case, since no secret message is transmitted
directly by Wyner's scheme, we can safely transmit the feedback-processed message at a rate $R_{AB,fb}=C_{AB}$
just like in (Subsection \ref{5sect4}). This discontinuity in $\gamma=0$ is why in (\ref{5relrsofinal}) we have to
compare the result of the maximization over $\gamma$ (corresponding to the mixed scheme) with the rate achieved by
the pure feedback scheme. If $\gamma=0.5$, we have $R_c^*=0$, and $R_e^*=C_{AB}-C_{AE}=C_s$, resulting in Wyner's
original scheme \cite{wyner1} -- hence no discontinuity in $\gamma=0.5$. Any value of $\gamma$ in the open
interval $(0,0.5)$ results in a combination of the two schemes.
\end{proof}

\textbf{Some more numerical results}

To illustrate the performance of our second approach to implementing the feedback-based secrecy scheme,
we consider a model in which the forward channels have crossover probabilities
$\epsilon_f=0.01$ and $\delta_f=0.02$, respectively.
The secrecy rate achievable by Wyner's original scheme is $C_s=0.06$.

In Figure \ref{5fig2} we show the overall achievable secrecy rate when we use our
feedback scheme, for different values of the crossover probabilities characterizing the feedback channels.
The corresponding optimal value of the parameter $\gamma$ is given in Figure \ref{5fig3}. Recall that
whenever $\gamma=0.5$, our feedback scheme reduces to Wyner's scheme, and hence the achievable secrecy rate
is $C_s$. Also, when $\gamma=0$, our scheme uses the whole capacity $C_{AB}$ of Bob's forward channel to
convey a secret message encoded with the help of the feedback sequence (no additional directly encoded
secret message is present). The improvements are significant.

\begin{figure}[]
\centering
\includegraphics[scale=0.45]{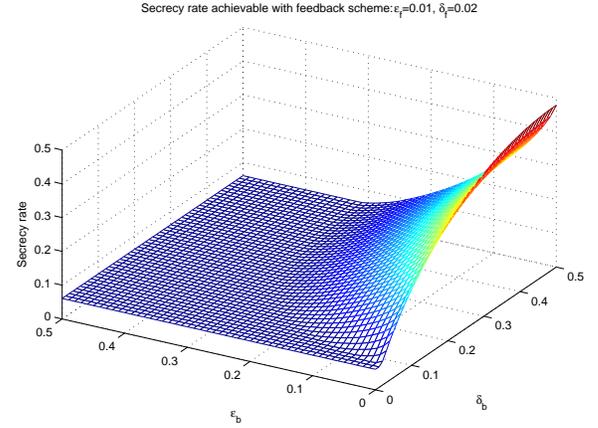}
\caption{Secrecy rate achievable by the feedback scheme for $\epsilon_f=0.01$ and $\delta_f=0.02$.}\label{5fig2}
\end{figure}

\begin{figure}[]
\centering
\includegraphics[scale=0.45]{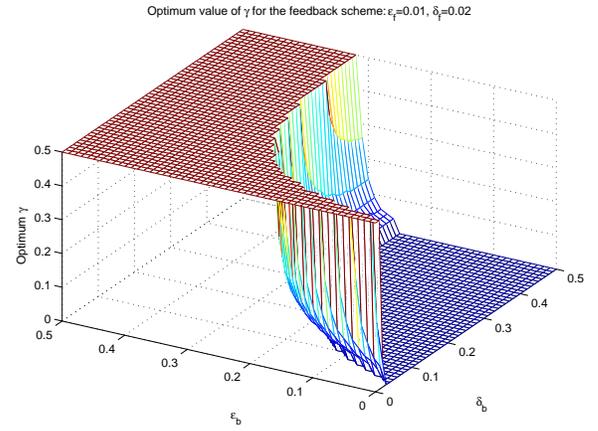}
\caption{The optimal value of $\gamma$ for the feedback scheme when $\epsilon_f=0.01$ and $\delta_f=0.02$.}\label{5fig3}
\end{figure}


\section{The Third Approach: The Reversed Feedback Scheme}\label{5sect6}

The feedback-based scheme discussed in the previous section encodes two secret messages
into the sequence transmitted over the forward channel.
The main idea behind this construction is based on the capability of the legitimate
transmitter (Alice) to transmit two types of messages simultaneously \cite{csisz1}: a first secret message to Bob, and
a common message to both Bob and Eve. In our case, the common message carries a second secret message,
the encoding of which is based on artificially degrading Eve's equivalent channel by the use of a feedback sequence.
But on a deeper level, the encoding of both secret messages uses the same principle developed in \cite{wyner1},
and none of them uses an explicit secret key. 

In this section, we discuss another approach to increasing the secrecy rate, namely when the feedback-based scheme is used on the reversed
channel (in the sense that the secret message encoded with the help of our feedback-based scheme is now transmitted from Bob to Alice
instead of from Alice to Bob) to send a secret
key from Bob to Alice, much like in \cite{ardetsan} and \cite{gunduz} (in fact the scenarios of 
\cite{ardetsan} and the correct part of \cite{gunduz} can be considered as special cases of our reversed mixed feedback
scheme.). Alice can subsequently use this secret key as a one-time pad, for transmitting a secret
message of the same entropy \cite{schneier} to Bob.

Although this new protocol requires more bandwidth than the previous one, it can sometimes
achieve better overall performance in terms of rate-equivocation region and secrecy rate.
However, this can only happen under the (necessary but not sufficient) condition that
the rate at which the secret key is transmitted from Bob to Alice exceeds the secrecy rates
achievable by the original feedback scheme.

Denote by $R_{s,p}$ the supremum of the rates at which Bob can transmit a secret key (or a one-time pad) to Alice
by using the feedback scheme developed in the previous section on the reversed channel.
Note that $R_{s,p}$ can be obtained from the expression of $R_{s,0}$ in (\ref{5relrsofinal}) by replacing
$\epsilon_f$ by $\epsilon_b$, $\delta_f$ by $\delta_b$, and vice versa.

To acquire this secret key, Alice and Bob engage in a protocol which is the reversed version of the one described in
the previous sections. Alice broadcasts a random feedback sequence of $n$ bits. Bob can then encode $k_1$ secret bits
into an $n$-sequence, which is added $\mod 2$ to Bob's received feedback sequence, and then the result is further
encoded into an $M$-sequence for asymptotically error free transmission over the $B\to A$ and $B\to E$ channels.

If $C_{BA}>C_{BE}$, the same $M$-sequence can carry an additional secret message of $k_2$ bits.
A number of $M+n$ channel uses are thus required for the transmission of a $k_r=k_1+k_2$-bit secret key
$\mathbf{r}^{k_r}$, and are accounted for in the expression of $R_{s,p}$ (that is, $R_{s,p}=\frac{k_r}{M+n}$).

After adding the secret key $\mathbf{r}^{k_r}$ to a secret message $\mathbf{s_r}^{k_r}$ of her own (also a ${k_r}$-bit sequence),
Alice encodes the result into an $M'$-sequence for the forward channel. Note here that because Alice uses a secret key,
the secrecy of $\mathbf{s_r}^{k_r}$ is preserved (by the Crypto lemma \cite{erez2}) even if Eve has perfect access to
the resulting ${k_r}$-bit sum sequence $\mathbf{r}^{k_r}\oplus \mathbf{s_r}^{k_r}$.

At this point, Alice could choose to encrypt everything she transmits to Bob. However, that strategy would require
the generation of a long secret key, and hence cause a large rate loss due to feedback -- recall that in our results we
count the bandwidth expenditure due to feedback. Instead, a mixed secrecy encoding strategy on the forward link may be optimal.
For example, a special adaptation of our reversed feedback scheme is possible when $C_{AB}>C_{AE}$. Recall that in Section \ref{5sect5} we
made a comment about the possibility to transmit a secret message, encoded in the cosets of a code,
at a rate arbitrarily close to the secrecy capacity $C_s=\max \{C_{AB}-C_{AE},0\}$, while using the feedback-processed
sequence $\mathbf{v}\oplus \mathbf{y}$ (that was carrying a separate secret message) for selecting the exact coset
representative to be transmitted. In Section \ref{5sect5} this was not possible due to the fact that Eve had some
information about $\mathbf{v}\oplus \mathbf{y}$, from its received feedback sequence $\mathbf{x}\oplus \mathbf{e_{bE}}$.
In the present scenario, however, the message $\mathbf{r}^{k_r}\oplus \mathbf{s}^{k_r}$ is totally unknown to Eve, and can be safely
used for selecting the coset representative.

Thus, a first $k_0$-bit secret message -- denote it by $\mathbf{s_0}^{k_0}$ -- can be transmitted from Alice
to Bob using Wyner's original scheme \cite{wyner1}, at a rate $\frac{k_0}{M'}\simeq C_s$. A second secret message
$\mathbf{s_r}^{k_r}$ can be transmitted at a rate $\frac{k_r}{M'}\simeq C_F$ (we denoted $C_F=\min\{C_{AB},C_{AE}\}$)
by using the secret key $\mathbf{r}^{k_r}$ generated through a reversed feedback scheme.

With this notation, and taking into account all $n+M+M'$ channel uses involved in the protocol
(i.e. $n$ for the reversed feedback sequence from Alice to Bob, $M$ for the transmission of the secret key from Bob
to Alice, and $M'$ for the transmission of the secret message from Alice to Bob), we can write the overall achievable secrecy rate as
\be\label{5revfbsecrecyrate}
R_{s,rf}=\frac{k_0+k_r}{n+M+M'}=\nonumber\\
=\frac{M'}{n+M+M'}(C_s+C_F)=C_{AB}\frac{R_{s,p}}{C_F+R_{s,p}},
\ee
where in the second equality we used the fact that $C_F+C_s=C_{AB}$ and that
\be
\frac{M'}{n+M+M'}=\frac{k_r/(n+M)}{k_r/M'+k_r/(n+M)}=\frac{R_{s,p}}{C_F+R_{s,p}}.
\ee

An observation is now in order. Although a secret key of length equal to that of the transmitted message
may be generated by our reversed feedback scheme, employing Wyner's original scheme, when possible, in
addition to the encryption by the secret key is always optimal. Indeed, Wyner's
scheme guarantees the transmission of a secret message without wasting any resources other than the $M'$ bits
of the forward channel sequence, while encrypting a message by a secret key generated as above requires additional
resources that grow linearly with the size of the secret key. For example, generating a secret key
long enough to encrypt the whole secret message (of size $M' C_{AB}$ bits) yields an achievable secrecy rate
equal to $C_{AB}\frac{R_{s,p}}{C_{AB}+R_{s,p}}$, which is always less than the secrecy rate $C_{AB}\frac{R_{s,p}}{C_F+R_{s,p}}$
above, achieved by the mixed scheme.

\textbf{Numerical Results}

For the first data set, of Section \ref{5sect4}, ($\epsilon_f=0.02$ and $\delta_f=0.01$), the achievable secrecy rate and the
optimal $\gamma$ for the reversed feedback scheme are given in Figures \ref{5fig8} and \ref{5fig5}. 
\begin{figure}[]
\centering
\includegraphics[scale=0.45]{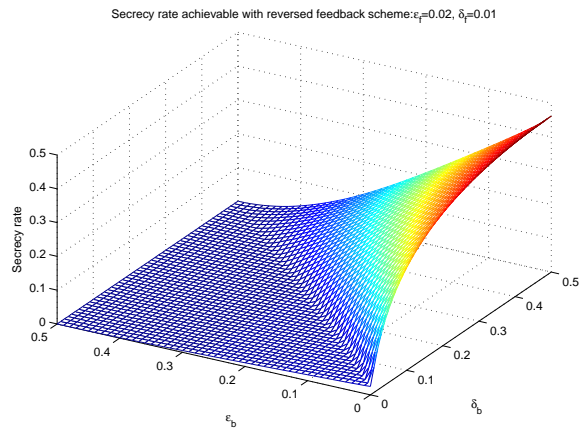}
\caption{Overall secrecy rate achievable by the reversed feedback scheme for $\epsilon_f=0.02$ and $\delta_f=0.01$.}\label{5fig8}
\end{figure}

\begin{figure}[]
\centering
\includegraphics[scale=0.45]{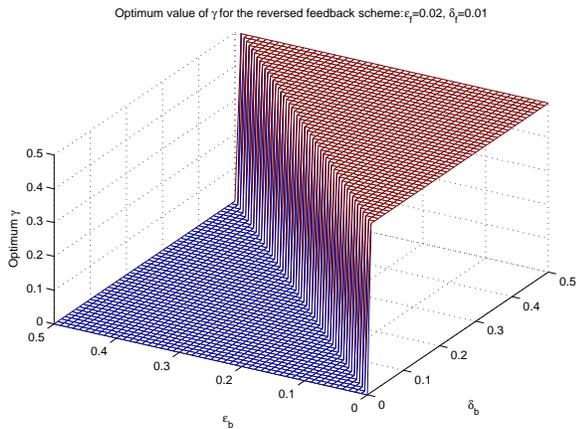}
\caption{The optimal value of $\gamma$ for the reversed feedback scheme when $\epsilon_f=0.02$ and $\delta_f=0.01$.}\label{5fig5}
\end{figure}

The improvement in the overall secrecy rate when using the reversed feedback scheme instead of
the regular feedback scheme, i.e. the function $\max\{0,R_{s,rf}-R_0\}$, is shown in Figure \ref{5fig6}.
Note that the reversed mixed feedback scheme is usually a better choice
when Eve's feedback channel is worse than Alice's (i.e. $\delta_b>\epsilon_b$).

\begin{figure}[]
\centering
\includegraphics[scale=0.45]{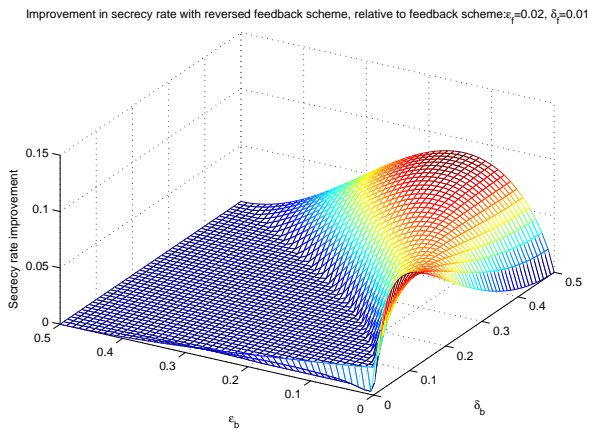}
\caption{Improvement in overall secrecy rate when using the reversed feedback scheme instead of
the regular feedback scheme: $\epsilon_f=0.02$ and $\delta_f=0.01$. Represented is the function
$\max\{0,R_{s,rf}-R_{s,0}\}$.}\label{5fig9}
\end{figure}

For the second data set, of Section \ref{5sect5}, ($\epsilon_f=0.01$ and $\delta_f=0.02$), the secrecy rate $R_{s,f}$ achievable by the
reversed mixed feedback scheme is given in Figure \ref{5fig4}, and the improvement over the regular mixed feedback scheme is
depicted in Figure \ref{5fig5}. 
\begin{figure}[]
\centering
\includegraphics[scale=0.45]{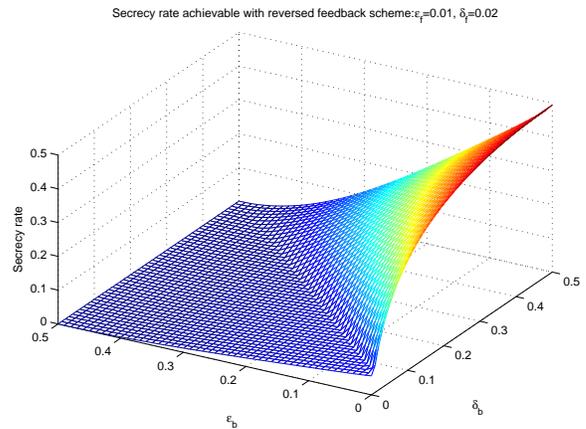}
\caption{Secrecy rate achievable by the reversed mixed feedback scheme for $\epsilon_f=0.01$ and $\delta_f=0.02$.}\label{5fig4}
\end{figure}
\begin{figure}[]
\centering
\includegraphics[scale=0.45]{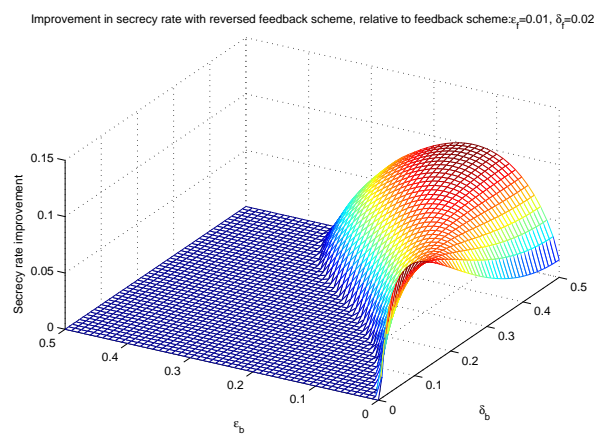}
\caption{Improvement in secrecy rate when using the reversed mixed feedback scheme instead of
the regular mixed feedback scheme: $\epsilon_f=0.01$ and $\delta_f=0.02$. Represented is the function
$\max\{0,R_{s,rf}-R_{s,0}\}$.}\label{5fig6}
\end{figure}
Once again, the reversed feedback scheme performs better when $\delta_b>\epsilon_b$.
It is also interesting to note the existence of a region in the $(\epsilon_b,\delta_b)$
plane (around the diagonal $\epsilon_b=\delta_b$), where our regular mixed feedback scheme
beats Wyner's scheme even when $\epsilon_b>\delta_b$ (see Figure \ref{5fig2}), and it also beats the reversed mixed feedback
scheme even when $\epsilon_b<\delta_b$ (see Figure \ref{5fig6}).


\section{Conclusions}\label{5sect8}

We presented a scheme that achieves a strictly positive secrecy rate
even if the eavesdropper's channel is better than the legitimate receiver's
channel, and improves the achievable secrecy rate if the eavesdropper's channel is worse.

We proposed several collaborative secrecy encoding methods, all of which use our feedback scheme.
Depending on the channel conditions,
the possible ways in which the feedback-based scheme can be used are summarized in
Table \ref{tblimplementmethods}. The term \emph{pure feedback scheme} in Table
\ref{tblimplementmethods} denotes the feedback scheme as implemented in Section \ref{5sect4}, i.e.
without being mixed with Wyner's scheme, while \emph{mixed feedback scheme} refers to the implementation
of Section \ref{5sect5}, under the optimal mixture between the pure feedback scheme and Wyner's scheme.
Similar considerations hold for the \emph{reversed pure/mixed feedback scheme} (see Section \ref{5sect6}).

\begin{table}\caption{Possible implementation of our feedback-based secrecy scheme.}\label{tblimplementmethods}
\centering
\begin{tabular}{|l|c|}
\hline
Channel conditions & Possible implementation\\
\hline\hline
$C_{BA}\leq C_{BE}$ & Pure feedback scheme\\
$C_{AB}\leq C_{AE}$ & \\
\hline
$C_{BA}>C_{BE}$ & Pure feedback scheme OR\\
$C_{AB}\leq C_{AE}$ & Reversed mixed feedback scheme\\
\hline
$C_{BA}\leq C_{BE}$ & Mixed feedback scheme OR\\
$C_{AB}>C_{AE}$ & Reversed pure feedback scheme\\
\hline
$C_{BA}>C_{BE}$ & Mixed feedback scheme OR\\
$C_{AB}>C_{AE}$ & Reversed mixed feedback scheme\\
\hline
\end{tabular}
\end{table}

Our scheme requires a new random sequence to be fed back from Bob,
for each codeword that Alice wants to send over the forward channel, in a manner similar
to the one-time pad.
We have shown that Theorem 4 in \cite{maur1}, which provides an upper bound on the achievable secrecy rate
when the public channel is error free, does not hold if this condition is not satisfied. The derivation of
such an upper bound for the more realistic scenario with imperfect public channels is still under our investigation.

The main advantage of our scheme is that it makes physical layer security protocols implementable
with only minor restrictions imposed on the eavesdropper's channel, restrictions which can be easily
ensured by perimeter defense (transmission power is low enough to guarantee a minimum error probability for
any terminal situated outside a safe perimeter).


\appendices

\section{Why the Approach of \cite{gunduz} Is Wrong}\label{5appgunduzwrong}

Since the ideas of \cite{gunduz} are closely related to our feedback secrecy encoding scheme,
and since \cite{gunduz} suffers from several subtle flaws, we dedicate this appendix
to pointing out three of these. 

First, all the rates of \cite{gunduz} are expressed without considering the expense of channel
uses due to feedback. While this may seem like a minor inconvenience as far as the forward channel rates
are involved, it becomes a problem when the forward channel rates are mixed with orthogonal feedback channel
rates, as in sections 3 and 4 of \cite{gunduz}. More specifically, the secrecy rate achievable by
Wyner's scheme on the forward channel cannot be added to the rate at which the secret key is generated
over the orthogonal feedback channel unless both channels use the exact same codeword length.

Second, even if  both the feedback and forward channels used the same
codeword length, the time sharing idea of \cite{gunduz} is questionable. It is claimed in \cite{gunduz}
that time sharing is performed between two modes of operation on the feedback channel: Wyner's regular
scheme, and our feedback secrecy scheme. With the notation of \cite{gunduz}, the two modes of operation
would normally yield secrecy rates $C_s^b=[h(\delta_b)-h(\epsilon_b)]^+$ (Wyner's scheme) and
$R_{fbs}=h(\epsilon_b+\delta_b-2\epsilon_b\delta_b)-h(\epsilon_b)$ (our feedback scheme). Thus,
the optimal time sharing between these schemes is to always use our feedback secrecy scheme
(i.e. $\alpha=0$ always in \cite{gunduz}) since $R_{fbs}>C_s^b$ regardless of the channel parameters.

Third, our secrecy feedback scheme cannot be mixed with Wyner's secrecy scheme the way that was
claimed in section 4 of \cite{gunduz}. The mixed strategy of \cite{gunduz} was inspired by some
of our results in \cite{myself6}, which are incorrect, and for which we assume full responsability.

Whenever this type of mixing is desired, special care should be taken to ensure that
Eve's information about the random feedback sequence, obtained on the feedback channel, does not compromise
the secrecy of Wyner's scheme. We have already mentioned this in Section \ref{5sect5}. In the following,
we give a more detailed explanation of this account. With the notation on Section \ref{5sect5}, consider
the secret message encoded by Wyner's scheme $\mathbf{s}^{k_2}$, the auxiliary message -- the one used
for picking the exact bin representative \cite{wyner1}, and which contains another
secret message, encoded with the use of the feedback scheme -- $\mathbf{v}\oplus \mathbf{y}$, Alice's transmitted
sequence $\mathbf{w_A^M}$ (which is a deterministic function of $\mathbf{s}^{k_2}$ and $\mathbf{v}\oplus \mathbf{y}$),
and Eve's received sequence $\mathbf{w_E^M}$.

The key to Wyner's secrecy scheme is to employ an
encoding scheme that guarantees that $H(\mathbf{v}\oplus \mathbf{y})$ is arbitrarily close to, but less than,
$I(\mathbf{w_A^M};\mathbf{w_E^M})$ \cite{wyner1}. Indeed, this is how the encoding in \cite{gunduz} is performed.

However, recall that due to the feedback scheme, Eve also has access to a distorted version $\mathbf{z}$
of Bob's feedback sequence $\mathbf{x}$. In order for Wyner's scheme to still work, we would need to have
$H(\mathbf{v}\oplus \mathbf{y}|\mathbf{z})$ arbitrarily close to $I(\mathbf{w_A^M};\mathbf{w_E^M},\mathbf{z})$.
But we can write
\be
H(\mathbf{v}\oplus \mathbf{y}|\mathbf{z})<H(\mathbf{v}\oplus \mathbf{y})\simeq\nonumber\\
\simeq I(\mathbf{w_A^M};\mathbf{w_E^M})\leq I(\mathbf{w_A^M};\mathbf{w_E^M},\mathbf{z}),
\ee
where the first inequality holds in a strict sense and follows from
the fact that, because $\mathbf{v}$ is the output of a Wyner-type channel encoder (for the artifficially
created equivalent channels -- see Section \ref{5sect2}), it cannot be uniformly distributed over
$\{0,1\}^n$, and thus $\mathbf{v}\oplus \mathbf{y}$ is not independent of $\mathbf{z}$.
The second inequality follows from the fact that $\mathbf{z}\to\mathbf{x}
\to\mathbf{y}\to\mathbf{w_A^M}\to\mathbf{w_E^M}$ form a Markov chain.


\section{The Optimal Tradeoff between the Secret Rate and the Common Rate}\label{5apptheorembinaryu}

In Section \ref{5sect5} we have already shown that for an eavesdropper channel with input (at Alice) $X$ and outputs $Y$ at
the legitimate receiver (Bob) and $Z$ at the eavesdropper (Eve), for which the Bob's channel is less noisy than Eve's
channel, a pair of one secret and one common messages can
be transmitted with asymptotically zero average error probability if and only if the rate $R_e$ of the secret message
and the rate $R_c$ of the common message satisfy
\be\label{5rel11111}
R_e\leq I(X;Y|U)-I(X;Z|U)
\ee
and
\be\label{5rel22222}
R_c\leq I(U;Z),
\ee
where $U$ is an auxiliary random variable such that $U\to X\to YZ$ form a Markov chain. This result is a straightforward
particularization of Theorem 1 in \cite{csisz1}, for the case when Bob's channel is less noisy and we are only concerned
with common and secret messages.
From an application point of view, an efficient communications system that uses the framework in \cite{csisz1} to transmit
two such messages should operate on the boundary of the $(R_e,R_c)$ rate region. For example, once $R_c$ is set
to a fixed value $R_c^*$, the system should aim to use the maximum secrecy rate $R_e$ available under these circumstances. This is
equivalent to finding the optimal auxiliary random variable $U$, and the optimal relation (we shall henceforth denote this
relation by the term ``channel'') between $U$ and $X$, that maximize $R_e$ for a given value of $R_c$.

To the best of our knowledge, at present there exist no studies that solve the above problem, even for the
simplest of cases. In this appendix, we prove that the alphabet size of $U$ can be reduced from $5$ letters to just
$3$ letters without any loss of optimality (Proposition \ref{prop3letters}), and then provide arguments to back
our Conjecture \ref{conj2letters} that the boundary of the $(R_e,R_c)$ rate region is achieved by a binary auxiliary
random variable $U$. Finally, we prove Proposition \ref{propbsc} which states that if $U$ is considered binary, then
the optimal auxiliary channel, which connects $U$ to the channel input random variable $X$, is a BSC. 

Our model consists of a main channel and an eavesdropper channel modeled as BSCs with crossover probabilities $\epsilon$ and $\delta$,
respectively, such that $\epsilon<\delta$ and $\epsilon, \delta \in [0,0.5]$.

\subsection{Proof of Proposition \ref{prop3letters}}

Following the proof of the admissibility of the size constraints in \cite{csisz1}, we make the following denotations:
\be\label{5quant111}
f_x(\mathbf{p})=Pr(X=0|\mathbf{p})=\mathbf{p}(0)=p,
\ee
\be
f_y(\mathbf{p})=H(Y|\mathbf{p})=h(\epsilon+p-2\epsilon p),
\ee
\be
f_z(\mathbf{p})=H(Z|\mathbf{p})=h(\delta+p-2\delta p),
\ee
where $\mathbf{p}$ denotes the probability mass function (p.m.f.) of $X$, while $f_y(\mathbf{p})$ and $f_z(\mathbf{p})$ are
the respective entropies of $Y$ and $Z$, when $X$ has the p.m.f. given by $\mathbf{p}$.
In the remainder of this appendix we shall denote $a\to b=a+b-2ab$, as the formula is the same as the
crossover probability of a concatenation of two BSCs with respective crossover probabilities $a$ and $b$.

Think of $\mathbf{p}$ as a function under the control of the random variable $U$. Thus, for any $u$ in the alphabet of $U$,
if $U=u$, then the p.m.f. of $X$ becomes $\mathbf{p}_u$.
We can now write, as in \cite{csisz1},
\be\label{5quant1}
Pr(X=0)=\sum_{u}Pr(U=u)f_x(\mathbf{p}_u), 
\ee
\be\label{5quant2}
I(U;Z)=H(Z)-H(Z|U)=\nonumber\\
=H(Z)-\sum_{u}Pr(U=u)f_z(\mathbf{p}_u),
\ee
\be\label{5quant3}
I(X;Y|U)=H(Y|U)-H(Y|X)=\nonumber\\
=\sum_{u}Pr(U=u)\left[f_y(\mathbf{p}_u)-h(\epsilon)\right],
\ee
and
\be\label{5quant4}
I(X;Z|U)=H(Z|U)-H(Z|X)=\nonumber\\
=\sum_{u}Pr(U=u)\left[f_z(\mathbf{p}_u)-h(\delta)\right],
\ee
where we used the fact that $U\to X\to YZ$ form a Markov chain and that $H(Z|X)$ and $H(Y|X)$
are independent of the actual probability distribution of $X$ (the variables are related through BSCs).
Note that $H(Z)$ is completely determined by the channel coefficients $\epsilon$ and $\delta$ and by
$Pr(X=0)$ defined in (\ref{5quant1}).

Consider the triple $(f_x(\mathbf{p}),f_y(\mathbf{p}),f_z(\mathbf{p}))=(p,h(\epsilon+p-2\epsilon p),h(\delta+p-2\delta p))$
and note that all of the quantities in (\ref{5quant1}) - (\ref{5quant4}) above are expressed in terms of the same convex combination of
one of the members of our triple. In other words, any set of feasible values for the quantities in (\ref{5quant1}) - (\ref{5quant4})
is uniquely determined by a point in the convex hull of the set
$\mathscr{C}=\{(p,h(\epsilon+p-2\epsilon p),h(\delta+p-2\delta p))|p\in [0,0.5]\}$, which is a 3D space curve.
Note here that for any $p\in [0.5,1]$ we can find a $p'\in [0,0.5]$ that yields the same values for
$I(U;Z)$, $I(X;Y|U)$ and $I(X;Z|U)$.

By Caratheodory's theorem, since $\mathscr{C}\subset \mathbb{R}^3$, any point in the convex hull of $\mathscr{C}$
can be expressed as a convex combination of only four points belonging to $\mathscr{C}$.
Using the same strengthened version of Caratheodory's theorem, due to Eggleston, as in \cite{csisz1}, we can state
that, since $\mathscr{C}$ is a \emph{connected}\footnote{See definitions in \cite{munkres}. A separation of a
topological space $\mathcal{S}$ is a pair of \emph{nonempty, disjoint, open} subsets of $\mathcal{S}$, whose union is
$\mathcal{S}$. The space $\mathcal{S}$ is \emph{connected} if there does not exist a separation of $\mathcal{S}$.}
subset of $\mathbb{R}^3$, any point in its convex hull
can be expressed as a convex combination of only three points belonging to $\mathscr{C}$ (Theorem 18 (ii) on page 35
of \cite{eggleston1}).
This implies that it is enough to consider only three values of $p$ to be able to produce
any triple of feasible values for the quantities in (\ref{5quant3}) - (\ref{5quant4}). But since $p$ is controlled by
the value of the auxiliary random variable $U$, we can therefore let $U$ be ternary.$\Box$

\subsection{Arguments Supporting Conjecture \ref{conj2letters}}

In Conjecture \ref{5lemma2ponts} below we state that, due to the special form of the set $\mathscr{C}$ defined in the
previous subsection, we
can actually express any point in its convex hull as the convex combination of only two of its points. 

This would imply that it is enough to consider only two values of $p$ to be able to produce
any triple of feasible values for the quantities in (\ref{5quant3}) - (\ref{5quant4}) and hence we can let $U$ be binary.

\begin{conj}\label{5lemma2ponts}
Consider the 3D space curve given by
$\mathscr{C}=\{(p,h(\epsilon+p-2\epsilon p),h(\delta+p-2\delta p))|p\in [0,0.5]\}$. Any point in the
convex hull of $\mathscr{C}$ can be expressed as the convex combination of only two points belonging to
$\mathscr{C}$.
\end{conj}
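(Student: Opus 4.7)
The plan is to prove the conjecture by showing that $\operatorname{conv}(\mathscr{C})$ coincides with the union $\Gamma$ of all chords of $\mathscr{C}$. Since $\Gamma \subseteq \operatorname{conv}(\mathscr{C})$ is immediate, only the reverse inclusion carries content. First, I would re-parametrize the curve by $u = \epsilon + p - 2\epsilon p$, so that $u \in [\epsilon, 1/2]$ and $\mathscr{C}$ becomes the image of $u \mapsto q(u) = (p(u), h(u), h(v(u)))$, where $p(u)$ and $v(u)$ are affine in $u$. This exposes the structural feature I want to exploit: two of the three coordinates of $q(u)$ are compositions of the strictly concave binary entropy $h$ with affine functions of the common scalar $u$, and the first coordinate is itself affine in $u$.

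Next, I fix an arbitrary target $q^* = (u^*, y^*, z^*) \in \operatorname{conv}(\mathscr{C})$ and consider the one-parameter family of chords of $\mathscr{C}$ whose projection onto the $(u, h(u))$-plane passes through $(u^*, y^*)$. Because $h$ is strictly concave on $[\epsilon, 1/2]$, for each $u_1 \in [\epsilon, u^*]$ there is a unique pair $(u_2(u_1), \lambda(u_1))$ with $u_2 \in [u^*, 1/2]$ and $\lambda \in [0,1]$ such that the chord from $(u_1, h(u_1))$ to $(u_2, h(u_2))$ passes through $(u^*, y^*)$, and both depend continuously on $u_1$. The $z$-coordinate of the corresponding chord of $\mathscr{C}$ at the point above $(u^*, y^*)$ is $g(u_1) = \lambda(u_1)\, h(v(u_1)) + (1 - \lambda(u_1))\, h(v(u_2(u_1)))$, which is continuous in $u_1$, so by the intermediate value theorem it sweeps a closed interval of values. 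The conjecture is equivalent to the statement that this interval equals the full range of $z$-values in $\operatorname{conv}(\mathscr{C})$ above the projection $(u^*, y^*)$.

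The main obstacle is precisely this last equality of ranges. The extremal $z$-values $z_{\min}$ and $z_{\max}$ over the fibre above $(u^*, y^*)$ are realized by probability measures on $[\epsilon, 1/2]$ that extremize $\int h(v(u))\, d\mu(u)$ subject to $\int d\mu = 1$, $\int u\, d\mu = u^*$, and $\int h(u)\, d\mu = y^*$. A Richter-Rogosinski-type moment argument yields extremal measures supported on at most three points, which is essentially the content of Proposition \ref{prop3letters}. Upgrading this bound to two points, which is what the conjecture requires, cannot be done by generic position alone. I expect the proof to hinge on the monotonicity of the affine map $u \mapsto v(u)$ combined with the strict concavity of $h$: these two ingredients should allow one to merge the $u$- and $h(u)$-moment constraints into a single shape constraint on $\mu$, forcing any three-support extremal measure to collapse to a two-support one. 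An equivalent route is to show directly that $\Gamma$ is convex, since a convex set containing $\mathscr{C}$ must contain $\operatorname{conv}(\mathscr{C})$; however, this convexity appears no easier to establish without first analyzing the fibre extremization problem. In the absence of such an argument, numerical experiments across the $(\epsilon, \delta)$ plane provide strong empirical support for the conjecture.
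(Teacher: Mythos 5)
First, note that the statement you are addressing is presented in the paper only as a conjecture: the paper's own ``supporting arguments'' rest on Remark \ref{5remappspc1}, which the authors state has been checked numerically but for which they have no theoretical proof. So there is no complete proof to match against, and your proposal, which also ends by deferring to numerical evidence, lands in essentially the same place as the paper does.

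Structurally, your plan is the same as the paper's up to a dual choice of which projection to pin. You fix the one-parameter family of chords whose projection onto the $(u,h(u))$-plane passes through the target and sweep the third coordinate by the intermediate value theorem; the paper fixes the family of chords whose $(p,h(\delta\to p))$-projection passes through $M_d$ and shows, via Remark \ref{5remappspc1}, that the corresponding chords in the $(p,h(\epsilon\to p))$-projection pass strictly above $M_e$ at one end of the sweep and strictly below at the other, so that continuity forces a chord hitting $M$ in all three coordinates. The genuine gap in your proposal is exactly the step you flag yourself: the intermediate value theorem only tells you that $g(u_1)$ sweeps \emph{some} closed interval, and you have not shown that this interval contains $z^*$. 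The missing ingredient --- your hoped-for consequence of ``monotonicity of $u\mapsto v(u)$ combined with strict concavity of $h$'' --- is precisely the ordering property of chord intersections in the two projections that the paper isolates as Remark \ref{5remappspc1} and likewise does not prove. Your reformulation as a Richter--Rogosinski moment problem, with three-point support (Proposition \ref{prop3letters}) as the generic bound, is a clean way to see why the two-point claim needs a structure-specific argument; and starting the sweep from the specific three-point representation of the target (rather than from the fibre extremizers $z_{\min},z_{\max}$) would let you run the two-sided trapping argument the paper uses. But until the ordering/monotonicity fact is actually established, neither your argument nor the paper's closes the conjecture.
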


\emph{Supporting arguments}
\vspace*{4pt}

Recall the denotation $x\to p =x+p-2xp$.
The space curve $\mathscr{C}$, along with its projections onto the $(p,h(\epsilon\to p))$ and $(p,h(\delta\to p))$ planes,
is represented in Figure \ref{5figappspc1}. We shall henceforth call the $p$ axis the ``abscissa'' axis, because it is the
common abscissa axis of both $(p,h(\epsilon\to p))$ and $(p,h(\delta\to p))$ planes.
Also represented in the figure is a random point $M$ in the convex hull of $\mathscr{C}$,
which was obtained as the convex combination of three points $A,B$ and $C$ belonging to $\mathscr{C}$. Due to Eggleston's extension of
Caratheodory's theorem \cite{eggleston1}, we know that any point in the convex hull of $\mathscr{C}$ can be obtained in this manner.
In the remainder of this argument we shall denote by
$P_d$ the projection of the point $P$ onto the $(p,h(\delta\to p))$ plane, and by $P_e$ the projection of the point $P$
onto the $(p,h(\epsilon\to p))$ plane, for any point $P\in\{A,B,C,D,E,F,G,M,X,Y\}$. Moreover, we denote by $\mathscr{C}_d$ and
$\mathscr{C}_e$ the projections of the space curve $\mathscr{C}$ on the two planes, respectively.

The present conjecture shows that in fact the point $M$ can be obtained as the convex combination of only two points of $\mathscr{C}$ - in
Figure \ref{5figappspc1} these points were denoted by $X$ and $Y$. 
\begin{figure}[]
\centering
\includegraphics[scale=0.47]{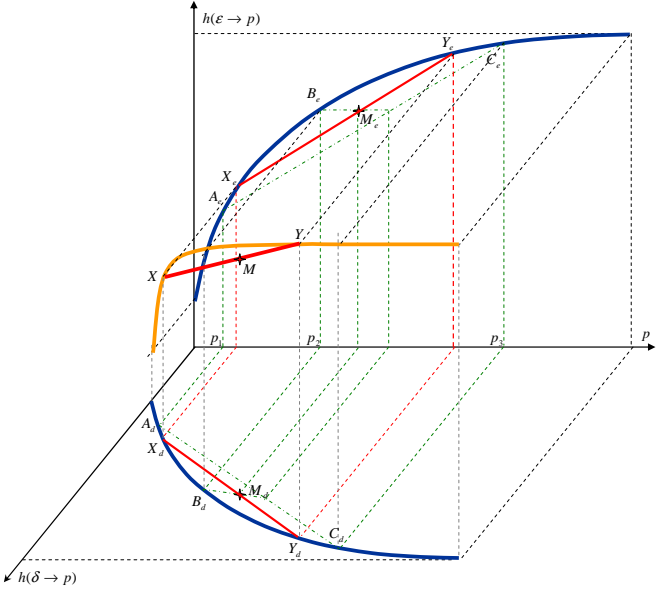}
\caption{The space curve and its projections onto the $(p,h(\epsilon\to p))$ and $(p,h(\delta\to p))$ planes.}\label{5figappspc1}
\end{figure}

This is equivalent to showing that there exist two values $p_x$ and $p_y$ of $p$, such that if we denote the points
$X_e=(p_x,h(\epsilon\to p_x))$, $X_d=(p_x,h(\delta\to p_x))$, $Y_e=(p_y,h(\epsilon\to p_y))$ and $Y_d=(p_y,h(\delta\to p_y))$,
then $M_e$ belongs to the line segment connecting $X_e$ and $Y_e$, and simultaneously $M_d$ belongs to the line segment
connecting $X_d$ and $Y_d$.
At this point, assume that the following remark is true.

\begin{rem}\label{5remappspc1}
(This remark has been checked numerically. However, we currently do not have a theoretical proof.)
Consider four random points $A,D,B,C$ on the space curve $\mathscr{C}$, such that their respective abscissae
$p_1,p_4,p_2,p_3$ satisfy $p_1<p_4<p_2<p_3$, and construct their projections $A_e,D_e,B_e,C_e$ and $A_d,D_d,B_d,C_d$
on the planes $(p,h(\epsilon\to p))$ and $(p,h(\delta\to p))$, respectively.
Then the abscissa of the intersection of the segments $A_eB_e$ and $D_eC_e$ is greater than the abscissa of the
intersection of the segments $A_dB_d$ and $D_dC_d$.
The result is illustrated in Figure \ref{5figappspc2} for two tuples of points, namely $(A,D,B,C)$ and $(A,B,F,C)$.
\end{rem}

\begin{figure}[]
\centering
\includegraphics[scale=0.47]{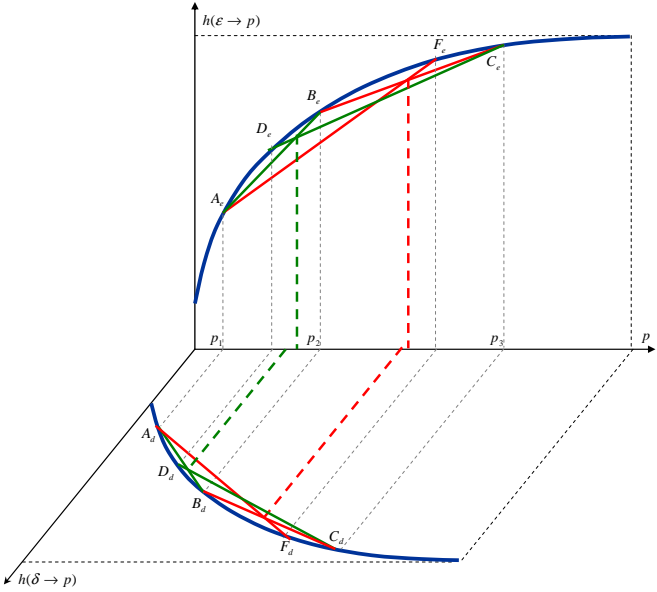}
\caption{Projections of the space curve: simplified problem.}\label{5figappspc2}
\end{figure}

Recall that the points $A,B$ and $C$ determine our point of interest $M$, that is $M=aA+bB+cC$, where
$a,b,c\in[0,1]$ and $a+b+c=1$. This implies that the intersection between the segments $A_eB_e$ and $C_eM_e$, and the
intersection between $A_dB_d$ and $C_dM_d$ have the same abscissa, namely $\frac{ap_1+bp_2}{a+b}$. Due to Remark
\ref{5remappspc1} above, this means that the segment $C_eM_e$ intersects the curve $\mathscr{C}_e$ at a point $E_e$ which has
an abscissa $p_{1,e}$ which is less than the abscissa $p_{1,d}$ of the intersection $D_d$ between $C_dM_d$ and $\mathscr{C}_d$,
as illustrated in Figure \ref{5figappspc3}.

\begin{figure}[]
\centering
\includegraphics[scale=0.47]{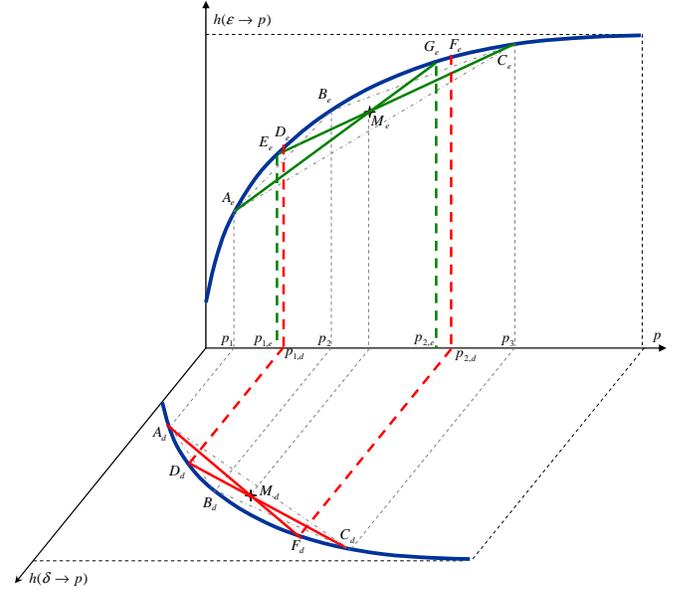}
\caption{Projections of the space curve: existence of a solution.}\label{5figappspc3}
\end{figure}

Denote by $D_e$ the point of $\mathscr{C}_e$ with the same abscissa $p_{1,d}$ as $D_d$. It is clear that the segment
$D_eC_e$ passes above the point $M_e$, while $D_dC_d$ passes through $M_d$.

By a similar rationale, the intersection between the segments $A_eM_e$ and $B_eC_e$, and the
intersection between $A_dM_d$ and $B_dC_d$ have the same abscissa, namely $\frac{bp_2+cp_3}{b+c}$. Due to Remark
\ref{5remappspc1}, this means that the segment $A_eM_e$ intersects the curve $\mathscr{C}_e$ at a point $G_e$ which has
an abscissa $p_{2,e}$ which is less than the abscissa $p_{2,d}$ of the intersection $F_d$ between $A_dM_d$ and $\mathscr{C}_d$ (see
Figure \ref{5figappspc3}). Denote by $F_e$ the point of $\mathscr{C}_e$ with the same abscissa $p_{2,d}$ as $F_d$.
It is clear that the segment $A_eF_e$ passes below the point $M_e$, while $A_dF_d$ passes through $M_d$.

This implies that there exists a value $p_x\in [p_1,p_{1,d}]$ of $p$ such that, if we denote
$X_e=(p_x,h(\epsilon\to p_x))$ and $X_d=(p_x,h(\delta\to p_x))$, then the segments $X_eM_e$ and $X_dM_d$ intersect
the curves $\mathscr{C}_e$ and $\mathscr{C}_d$, respectively, at points $Y_e$ and $Y_d$ with the same abscissa $p_y\in[p_{2,d},p_3]$.
Hence $X_e$ and $X_d$ are the projections of a point $X\in \mathscr{C}$, and $Y_e$ and $Y_d$ are the projections
of a point $Y\in \mathscr{C}$, and the segment $XY$ goes through $M$.$\Box$

\subsection{Proof of Proposition \ref{propbsc}}

\begin{figure*}[t!]
\normalsize \newcounter{mytempeqncnt}
\setcounter{mytempeqncnt}{\value{equation}}
\setcounter{equation}{65}
\be\label{5exprg2}
g''(x)=\frac{q}{x(1-x)+\mu(\alpha)}+\frac{1-q}{x(1-x)+\mu(\beta)}-\frac{1}{x(1-x)+\mu(\gamma)}=\nonumber\\
=\frac{x(1-x)[\mu(\gamma)-q\mu(\alpha)-(1-q)\mu(\beta)]+\mu(\gamma)(q\mu(\beta)+(1-q)\mu(\alpha))-\mu(\alpha)\mu(\beta)}
{(x(1-x)+\mu(\alpha))(x(1-x)+\mu(\beta))(x(1-x)+\mu(\gamma))}
\ee
\setcounter{equation}{\value{mytempeqncnt}}
\hrulefill
\vspace*{4pt}
\end{figure*}
\addtocounter{equation}{1}

Let $U$ belong to $\{0,1\}$ (whether that is optimal or not is still an open problem),
and denote $q=Pr(U=0)$. Since $X$ is also binary, the channel between $U$ and $X$ can be completely characterized by
two transition probabilities. Denote $\alpha=Pr(X=1|U=0)$ (this implies $Pr(X=0|U=0)=1-\alpha$), and
$\beta=Pr(X=0|U=1)$ (this implies $Pr(X=1|U=1)=1-\beta$).

Note that (\ref{5rel11111}) and (\ref{5rel22222}) can be rewritten as:
\be
R_e\leq [H(Z|X)-H(Y|X)]-\nonumber\\
-[q(H(Z|U=0)-H(Y|U=0))+\nonumber\\+(1-q)(H(Z|U=1)-H(Y|U=1))]
\ee
and
\be
R_c\leq H(Z)-[qH(Z|U=0)+(1-q)H(Z|U=1)],
\ee
With the notation above, the upper bounds can be written as
\be
R_{e,u}(q,\alpha,\beta)\leq [h(\delta)-h(\epsilon)]-\nonumber\\
-[q(h(\alpha\to\delta)-h(\alpha\to\epsilon))+\nonumber\\
+(1-q)(h(\beta\to\delta)-h(\beta\to\epsilon))]
\ee
and
\be
R_{c,u}(q,\alpha,\beta)\leq h(q\alpha+(1-q)(1-\beta)\to \delta)- \nonumber\\
-[qh(\alpha\to\delta)+(1-q)h(\beta\to\delta)],
\ee
where $a\to b$ stands for $a(1-b)+b(1-a)=a+b-2ab$ as before, and we emphasized the dependence of the upper bounds upon
the triple $(q,\alpha,\beta)$.

In what follows we take a contradictory approach. Consider any triple $(q,\alpha,\beta)$ and denote
\be
R_x(q,\alpha,\beta)=1-[qh(\alpha\to\delta)+(1-q)h(\beta\to\delta)].
\ee
We show that if we replace this triple by the triple $(0.5,\gamma,\gamma)$ (corresponding to a
uniform distribution of $U$ over $\{0,1\}$ and a BSC between $U$ and $X$), such that
\be\label{5eqRxdelta}
R_x(q,\alpha,\beta)=R_x(0.5,\gamma,\gamma)
\ee
(we also prove that such a $\gamma$ exists always), we have
$R_{e,u}(q,\alpha,\beta)\leq R_{e,u}(0.5,\gamma,\gamma)$ and $R_{c,u}(q,\alpha,\beta)\leq R_{c,u}(0.5,\gamma,\gamma)$.
Therefore, a triple $(q,\alpha,\beta)$ for which either $q\neq 0.5$ or $\alpha\neq \beta$ holds cannot be optimal, and
hence the last part of our theorem is proved.

Note that $R_x(q,\alpha,\beta)=R_x(0.5,\gamma,\gamma)$ translates to
\be\label{5eqfinggamma}
qh(\alpha\to\delta)+(1-q)h(\beta\to\delta)=h(\gamma\to\delta),
\ee
Since $qh(\alpha\to\delta)+(1-q)h(\beta\to\delta)\in [0,1]$, the binary entropy function is a bijection over
$[0,0.5]$ and $f(\gamma)=\gamma\to\delta$ with $\delta\in (0,0.5)$ is also a bijection over $[0,0.5]$,
we can always find a $\gamma$ that satisfies (\ref{5eqfinggamma}).
Since $h([q\alpha+(1-q)(1-\beta)]\to \delta)\leq 1$ and $h([0.5\gamma+0.5(1-\gamma)]\to \delta)= h(0.5\to \delta)=0$
it is straightforward to see that

\be
R_{c,u}(q,\alpha,\beta)\leq R_x(q,\alpha,\beta) =\nonumber\\
=R_x(0.5,\gamma,\gamma)=R_{c,u}(0.5,\gamma,\gamma).
\ee
We can now write
\be
R_{e,u}(0.5,\gamma,\gamma)-R_{e,u}(q,\alpha,\beta)=\nonumber\\
=h(\gamma\to\epsilon)-qh(\alpha\to\epsilon)+(1-q)h(\beta\to\epsilon).
\ee

Define $g(x)=h(\gamma\to x)-qh(\alpha\to x)+(1-q)h(\beta\to x)$. From (\ref{5eqRxdelta}) we have that
$g(\delta)=0$, and it is straightforward to see that $g(0.5)=0$. Since we only discuss the case when
$\delta<0.5$, we now know that $g(x)$ has two different zeros over the interval $[0,0.5]$. We need to
show that for any $\epsilon<\delta$ we have $g(\epsilon)>0$.

Denote $g'(x)=\frac{dg(x)}{dx}$ and $g''(x)=\frac{d^2g(x)}{dx^2}$ the first and second order derivatives of $g$.
With the notation $\mu(x)=\frac{x(1-x)}{(1-2x)^2}$, we can write $g''$ as in (\ref{5exprg2}).

Since the denominator of $g''$ is always positive, the equation $g''(x)=0$ reduces to a second degree equation in $x$.
Thus $g''$ has at most two real zeros, which are symmetric with respect to the point $0.5$, and hence at most one zero
(denote it by $z''$)
in the interval $[0,0.5]$. Moreover, since $\mu(x)$ is a strictly convex function of $x$, the coefficient
$-[\mu(\gamma)-q\mu(\alpha)-(1-q)\mu(\beta)]$ of $x^2$ in the numerator of $g''$ is strictly positive. This implies that
$g''(x)>0$ for $x\in[0,z'']$.

Now suppose that $g(x)$ had more than two zeros on the interval $[0,0.5]$. Then $g'(x)$ would have at least two zeros
on the open interval $(0,0.5)$, and hence a total of three zeros in $[0,0.5]$ (it is straightforward to check that
$g'(0.5)=0$). Thus $g''$ would need to have at least two zeros in $(0,0.5)$. But we have already shown that this is impossible.
Therefore, $g(x)$ has only two zeros in the interval $[0,0.5]$ (these are $\delta$ and $0.5$).

As a consequence, $g'$ has at least one zero in $(\delta,0.5)$ -- denote this zero by $z'$. Since $g'$ has a zero in $0.5$,
this implies that the zero $z''$ of $g''$ is in the interval $(z',0.5)$. We can now write $\delta<z'<z''$.
We already know that $g''(x)>0$ on $[0,z'')$, thus $g'(x)$ is strictly increasing on $[0,z']$, and since
$g'(z')=0$, this means that $g'(x)<0$ on $[0,\delta]$. But since $g(\delta)=0$, this means that
for any $\epsilon<\delta$ we have $g(\epsilon)>0$.
Our argument is now complete.$\Box$


\bibliographystyle{IEEEtran}
\bibliography{jamming}
\end{document}